\documentclass[11pt, a4paper]{article}
\pdfoutput=1 

\usepackage{fullpage}

\usepackage{mdframed}

\usepackage[utf8]{inputenc}

\usepackage{libertine}

\usepackage{amsfonts}
\usepackage{amsmath}
\usepackage{amssymb}
\usepackage{amsthm}
\usepackage{float}
\usepackage{enumitem}
\usepackage[svgnames]{xcolor}

\usepackage{tikz}  
\usetikzlibrary{arrows}
\usetikzlibrary{patterns,snakes}
\usetikzlibrary{decorations.shapes}
\tikzstyle{overbrace text style}=[font=\tiny, above, pos=.5, yshift=5pt]
\tikzstyle{overbrace style}=[decorate,decoration={brace,raise=5pt,amplitude=3pt}]
\usetikzlibrary{shapes.geometric}

\usepackage{subcaption}

\usepackage{cmap}
\usepackage[breaklinks=true]{hyperref}
\hypersetup{colorlinks={true},urlcolor={blue},linkcolor={DarkBlue},citecolor=[named]{DarkGreen}}

\usepackage[authoryear,square]{natbib}

\usepackage{xspace}
\usepackage{fancyhdr}
\usepackage{tcolorbox}

\usepackage{microtype}
\usepackage[capitalise,nameinlink]{cleveref}

\usepackage{doi}

\usepackage{booktabs} 
\usepackage[ruled,linesnumbered]{algorithm2e} 
\usepackage{setspace}

\SetAlFnt{\small}
\SetAlCapFnt{\small}
\SetAlCapNameFnt{\small}
\SetAlCapHSkip{0pt}
\IncMargin{-\parindent}

\usepackage{authblk}
\usepackage{comment}


\newcommand{\eps}{\ensuremath{\varepsilon}\xspace}

\newcommand{\ical}{\ensuremath{\mathcal{I}}\xspace}
\newcommand{\jcal}{\ensuremath{\mathcal{J}}\xspace}

\newcommand{\calM}{\ensuremath{\mathcal{M}}\xspace}

\newcommand{\cost}{\text{cost}}
\newcommand{\SC}{\text{SC}}
\newcommand{\SW}{\text{SW}}
\newcommand{\MC}{\text{MC}}
\newcommand{\leftmost}{\text{left}}
\newcommand{\rightmost}{\text{right}}

\newcommand{\C}{\mathcal{G}}

\newtheorem{theorem}{Theorem}[section]

\newtheorem{lemma}[theorem]{Lemma}

\theoremstyle{definition}
\newtheorem{definition}[theorem]{Definition}

\usepackage[capitalise,nameinlink]{cleveref}
\crefname{enumi}{}{}

\Crefname{appendix}{Supplement Section}{Supplement Sections}

\allowdisplaybreaks

\title{\bf Truthful Interval Covering}

\author[1]{Argyrios Deligkas}
\author[2]{Aris Filos-Ratsikas}
\author[3]{Alexandros A. Voudouris}

\affil[1]{Royal Holloway University of London, United Kingdom}
\affil[2]{University of Edinburgh, United Kingdom}
\affil[3]{University of Essex, United Kingdom}

\date{}

\begin{document}

\maketitle

\begin{abstract}
We initiate the study of a novel problem in mechanism design without money, which we term {\em Truthful Interval Covering} (TIC). 
An instance of TIC consists of a set of agents each associated with an individual interval on a line, and the objective is to decide where to place a {\em covering interval} to minimize the total social or egalitarian cost of the agents, which is determined by the intersection of this interval with their individual ones. This fundamental problem can model situations of provisioning a public good, such as the use of power generators to prevent or mitigate load shedding in developing countries. In the strategic version of the problem, the agents wish to minimize their individual costs, and might misreport the position and/or length of their intervals to achieve that. Our goal is to design \emph{truthful} mechanisms to prevent such strategic misreports and achieve good approximations to the best possible social or egalitarian cost. We consider the fundamental setting of known intervals with equal lengths and provide tight bounds on the approximation ratios achieved by truthful deterministic mechanisms. For the social cost, we also design a randomized truthful mechanism that outperforms all possible deterministic ones. Finally, we highlight a plethora of natural extensions of our model for future work, as well as some natural limitations of those settings. 
\end{abstract}

\section{Introduction}

We introduce the \emph{Truthful Interval Covering (TIC)} problem, a novel problem in the field of mechanism design without money \citep{PT09}. In this problem, there is a set $N$ of $n$ agents, each of whom is associated with an interval $I_i$ on the line of real numbers. There is also a \emph{covering interval} $C$, which should be placed somewhere on the line. The cost of agent $i \in N$ is a function of the portion of $I_i$ that $C$ covers; in the simplest version of the problem, the cost is just the part of $I_i$ that is not covered by $C$. The goal is to place the interval so as to minimize the \emph{social cost} (total cost of the agents) or the \emph{max cost} (maximum individual agent cost), while taking the incentives of the agents into account. Indeed, agents might misreport information about their intervals (e.g., their position or length) if that would lead to an outcome that is preferable for them. 

The TIC problem captures applications in which a public good is provisioned and shared among a set of participants. 
We provide a few indicative of many examples below.
\begin{itemize}[leftmargin=*]
\item The covering interval could represent the time interval during which a power generator can be operated, and the individual intervals capture the times during which each citizen would like to have access to electricity. The minimum-social cost solution is one that covers as much demand for electricity as possible. This is particularly relevant in developing countries where electricity might be a scarce resource, and can be used to prevent or mitigate the effects of load shedding.\footnote{E.g., see \url{https://en.wikipedia.org/wiki/South_African_energy_crisis}}  
\item The covering interval could correspond to the range of a public WiFi hotspot to be placed in an area with low broadband connectivity, when the agents' intervals are the signal ranges of their devices. 
\item The covering interval could capture the time in which to schedule a university open-day or a job fair, given the preferences of the potential attendees over the different time intervals in the day. 
\item The covering interval could be an express public transportation line connecting parts of a city or intercity network, and the agents express which parts of the route they would like this service to cover.  
\end{itemize}

Despite its fundamental nature, and its resemblance to other classic algorithmic problems like the interval scheduling problem and its variants \citep{kolen2007interval}, the interval covering problem has seemingly not been studied systematically from a purely algorithmic point of view. This can likely be attributed to the fact that the optimal covering can be found in polynomial time via a rather simple algorithm (see \Cref{prop:algorithmic-easy} in \Cref{sec:Preliminaries}). Once we move to a mechanism design regime however, where the incentives of the agents for miresporting come into effect, the problem becomes much more challenging. \emph{Truthful} mechanisms, which eliminate those incentives, are necessarily suboptimal, and resort to approximations. Our goal is to design truthful mechanisms that achieve approximations that are as small as possible, and identify the limitations of such mechanisms via appropriate inapproximability results.  

\subsection{Our Contribution}\label{sec:our-results}
In this paper, we introduce the Truthful Interval Covering (TIC) problem as a novel and interesting problem in mechanism design without money. Our technical contribution is as follows: 
\begin{itemize} 
\item We provide upper and lower bounds on the approximation ratio of truthful mechanisms for the most fundamental version of the problem, where all of the interval lengths are \emph{known} and \emph{equal}, which already turns out to be quite challenging. We start with the social cost objective and deterministic truthful mechanisms, for which, in \Cref{sec:deterministic}, we prove a \emph{tight bound} of $2-2/n$ on the approximation ratio. 
In \Cref{sec:randomized}, we present a simple \emph{randomized, universally truthful} mechanism that achieves an approximation ratio of $5/3$, thus outperforming all deterministic ones. 
In \Cref{sec:max}, we turn our attention to the max cost objective, for which we show a tight approximation ratio of $2$ for deterministic mechanisms, and a lower bound of $2$ for a natural class of randomized mechanisms, thus showing that randomization might not be able to lead to improvements for this objective. 

\item We also consider two natural extensions of the main model in \Cref{sec:extensions}. In the first one, the agent intervals are assumed to be unknown, and thus the agents can misreport their starting positions as well as the lengths of their intervals. We show that no truthful mechanism can achieve a meaningful approximation ratio in terms of both the social and the max cost. In the second extension, we consider the case where the interval lengths are known but might be unequal. We show that a simple mechanism, which places the covering interval at the starting position of the agent with the maximum-length interval is truthful and achieves a linear approximation ratio in terms of the social cost, and an approximation ratio of at most $2$ for the max cost; the latter is best possible when the interval lengths are known (any might be equal or unequal). 
\end{itemize}
In \Cref{sec:open}, we present and discuss several other interesting variants of the main model, which capture a wealth of different possible application domains. 
We believe that there is great potential for follow-up work, and the problem could enjoy similar success as other problems within the research agenda of mechanism design without money, such as truthful facility location \citep{chan2021mechanism,PT09}, truthful resource allocation \citep{krysta2014size,filos2014social,abebe2020truthful}, or impartial selection \citep{alon2011sum,fischer2014optimal,bjelde2017impartial}.

\subsection{Related Work and Discussion}

The research agenda of {\em approximate mechanism design without money} was put forward by \citet{PT09} and aims to capture settings involving selfish participants, in which truthful mechanisms are used to optimize a social objective. These mechanisms are compared, via their approximation ratio, against the performance of the best-possible outcome, which would be achievable if the participants were not selfish. The prototypical problem in this field is that of truthful (or strategyproof) facility location, which, following its inception in \citep{PT09}, has flourished into an extremely fruitful research area, giving rise to a plethora of works on several different variants; see the the comprehensive survey of \citet{chan2021mechanism} for details. 

Our setting is markedly different from facility location, where the cost of an agent depends on the \emph{distance} from the location of the facility. In contrast, in our case, the cost of an agent is a function of how much her associated interval is covered. 
Still, there are some conceptual similarities between the two problems, namely in terms of the truthful mechanisms employed to achieve the approximation guarantees. In particular, similarly to the literature of facility location, we also employ mechanisms that are based around $k$-th ordered statistics (e.g., the median) of the agents' reports. These are in fact not particular to facility location, but more generally centered around the concept of single-peaked preferences \citep{black1948rationale,moulin1980}. Despite this superficial connection, the proofs for the performance of these mechanisms are very much different in the TIC problem; it is worth mentioning that, contrary to the setting of \citep{PT09}, in TIC these mechanisms provably do not admit social-cost minimizing outcomes. 

Another related problem is that of strategyproof activity scheduling studied by \citet{xu2020strategyproof}, in which an activity (represented by an interval) is to be placed on a line based on the preferences of self-interested agents. Despite the superficial similarity, this setting is again notably different from ours; in activity scheduling, each agent reports a single point and her cost is the distance from the closest endpoint of the activity interval. This makes the problem much closer to facility location rather than our covering problem. The work of \citet{bei2022cake} on truthful cake sharing is also related to our paper. In contrast to our model, where the agents report their intervals on the line and have costs for the chosen covering interval, in the cake cutting model of \citeauthor{bei2022cake}, the agents report piecewise uniform utilities over a cake (represented as a fixed-size interval) and the objective is to choose an interval of certain length that they will all share. 

From a purely algorithmic point of view (without incentives), problems related to intervals (like scheduling or coloring) are rather fundamental and included in most textbooks on algorithms, e.g., see \citep{kleinberg2006algorithm,roughgarden2022algorithms}. As we mentioned earlier, the algorithmic variant of TIC admits an easy polynomial-time algorithm, and the problem becomes challenging once studied under the mechanism design regime. 

Finally, we remark that the term ``truthful interval cover'' has been used before in the literature for mechanisms with money for solving a crowdsourcing problem, where agents bid for intervals of tasks that they are willing to get; see~\citep{dayama2015truthful,markakis2022improved}. This model is completely different compared to the one we propose and study here, and, thus, we do not expect any ambiguity between our problem and theirs.   


\section{The Setting}\label{sec:Preliminaries}
In the Truthful Interval Covering (TIC) problem, there is a set $N$ of $n$ \textit{agents}, each associated with an \textit{interval} $I_i=[s_i,t_i]$ on the line of real numbers. There is also a \textit{covering interval} $C=[s,t]$ whose position needs to be determined. We focus on the most fundamental version of the problem, in which the interval lengths are all known and equal, i.e., $|I_i|=|I_j|=|C|$ for any agents $i,j \in N$. Given this, we can assume without loss of generality that $|I_i| = 1$ for any $i \in N$ and $|C|=1$. Let $\ical = (I_1, I_2, \ldots, I_n)$ be the vector of the agents' intervals, to which we refer to as an \emph{instance}. Without loss of generality, we assume that for any two agents $i,j \in N$ with $i<j$, $s_i \leq s_j$, i.e., the intervals in $\ical$ appear in non-decreasing order of left endpoints. Using this, we may refer to an agent $i$ being before or after another agent if $i < j$ or $i > j$, respectively. We will also say that an agent $i$ is before or after a point $x$ if $s_i \leq x$ or $s_i > x$. 

Given a position for the covering interval $C$, the cost of an agent $i \in N$ is the part of her interval $I_i$ that does not overlap with $C$, i.e., 
\begin{equation*}
\cost_i(C) = |I_i \setminus (I_i \cap C)| = 1 - |I_i \cap C|.
\end{equation*}
The {\em social cost} of $C$ is the total cost of all agents:
$$\SC(C) = \sum_i \cost_i(C).$$
The {\em max cost} of $C$ is the maximum cost over all agents:
$$\MC(C) = \max_i \cost_i(C).$$
When the objective (social or max cost) is clear from context, for any instance $\ical$, we will use $O(\ical)$ to denote the covering interval that minimizes the objective for instance for $\ical$; when $\ical$ is also clear from context, we will simply write $O$.

A deterministic {\em mechanism} $\calM$ takes as input an instance $\ical$ and outputs the position of a covering interval $\calM(\ical)$, i.e., the position $s \in \mathbb{R}$ of the left endpoint of the covering interval. We also consider \emph{randomized} mechanisms, which, instead of a single position, output a probability distribution $D_{\calM}(\ical)$ over possible positions of the covering interval. 

The {\em approximation ratio} of $\calM$ in terms of an objective $f \in \{\SC,\MC\}$ is the worst-case ratio of the objective value of the covering interval computed by $\calM$ over the smallest possible objective value achieved by any covering interval, over all instances of the problem: 
\begin{align*}
    \sup_{\ical} \frac{f(\calM(\ical))}{\min_{C} f(C)}.
\end{align*}
For randomized mechanisms, the definition is very similar, with the only difference that the \emph{expected objective value} $\mathbb{E}_{\calM(\ical) \sim D_{\calM}(\ical)}[f(\calM(\ical))]$ appears in the numerator. 

The term ``Truthful'' in the name of the TIC problem comes from the fact that the information about the intervals is not public knowledge, but has to be elicited from the agents. The agents are self-interested entities who might misreport this information if that results in them achieving a smaller cost. In the setting we consider, since the interval lengths are all $1$, the elicited information is the position of the interval of each agent $i \in N$, i.e., the left endpoint $s_i \in \mathbb{R}$ of $I_i$. For simplicity, we say that each agent reports her interval $I_i$ rather than $s_i$. We discuss generalizations to what constitutes elicited information in \Cref{sec:discussion}. 

A mechanism $\calM$ is said to be {\em truthful} if does not incentivize the agents to misreport their intervals, that is, for every agent $i$ and every possible interval $I_i'$ that the agent could report,
\begin{equation}
\label{eq:truthfulness}
\cost_i(\calM(\ical)) \leq \cost_i(\calM(I_i',\ical_{-i}))
\end{equation}
where $\ical_{-i}$ is the vector $\ical$ without the $i$-th coordinate.

For randomized mechanisms, the definition of truthfulness extends to \emph{truthfulness in expectation}, which stipulates that no agent can decrease her expected cost by deviating. In our positive results, we will actually use a stronger truthfulness guarantee called \emph{universal truthfulness}. A mechanism is universally truthful if it is truthful for any realization of truthfulness, i.e., Inequality~\eqref{eq:truthfulness} holds for any $\calM(\ical) \sim D_{\calM}(\ical)$.

Our goal in this paper is to design truthful mechanisms (either deterministic truthful or universally truthful) with approximation ratios as close to $1$ as possible. To achieve this, we focus on the following class of mechanisms, called \emph{$k$-ordered statistics}.

\begin{definition}[$k$-ordered statistic]\label{def:kth-ordered-statistic}
For $k \in [n]$, the $k$-ordered statistic mechanism $\calM$ outputs the interval reported by the $k$-th ordered agent $i$ in instance $\ical$, i.e., $\calM(\ical)=I_i$. 
\end{definition}
For example, for $k=\lfloor n/2 \rfloor$, the $k$-ordered statistic mechanism outputs exactly the interval reported by the median agent. $k$-ordered statistic mechanisms (as well as their convex combinations) are well-known to be truthful in other contexts, e.g., see \cite{dummett1961stability, moulin1980}. We prove that, for similar reasons, any $k$-ordered statistic mechanism is truthful in our setting.

\begin{theorem}\label{lem:kth-ordered-statistics-truthful}
For any $k \in [n]$, the $k$-ordered statistic mechanism is truthful. Furthermore, any convex combination over $k$-ordered statistic mechanisms is universally truthful. 
\end{theorem}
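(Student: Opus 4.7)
The plan is to leverage the fact that each agent's cost function is V-shaped (single-valleyed) in the left endpoint $s$ of the output interval $C = [s, s+1]$. Concretely, $\cost_i(C) = 1 - |I_i \cap C| = \min(|s - s_i|, 1)$, which is non-increasing for $s \leq s_i$ and non-decreasing for $s \geq s_i$, with unique minimum $0$ at $s = s_i$. Hence, any change to the output that moves it further from $s_i$ (on the same side of $s_i$) weakly increases agent $i$'s cost.

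Next, I would analyze how the $k$-th ordered statistic responds to agent $i$'s report $s_i'$, holding the remaining reports $\ical_{-i}$ fixed. Let $s_{(1)} \leq s_{(2)} \leq \ldots \leq s_{(n-1)}$ denote the sorted left endpoints of the other agents' intervals, with the conventions $s_{(0)} = -\infty$ and $s_{(n)} = +\infty$. A direct case analysis on where $s_i'$ is inserted into this sorted order shows that the $k$-th smallest element of $\{s_{(1)}, \ldots, s_{(n-1)}\} \cup \{s_i'\}$ equals $\text{med}(s_{(k-1)}, s_i', s_{(k)})$, the median of three values. Consequently the left endpoint output by $\calM$, as a function of $s_i'$, equals $s_{(k-1)}$ for $s_i' \leq s_{(k-1)}$, equals $s_i'$ for $s_i' \in [s_{(k-1)}, s_{(k)}]$, and equals $s_{(k)}$ for $s_i' \geq s_{(k)}$; in particular it is non-decreasing in $s_i'$ and confined to $[s_{(k-1)}, s_{(k)}]$ regardless of what agent $i$ reports.

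Combining these facts, truthfulness follows by a three-case argument based on the location of agent $i$'s true left endpoint $s_i$ relative to $[s_{(k-1)}, s_{(k)}]$. If $s_i \in [s_{(k-1)}, s_{(k)}]$, truthful reporting yields output $s_i$ and cost $0$, so no deviation can help. If $s_i < s_{(k-1)}$, then for every report $s_i'$ the output lies in $[s_{(k-1)}, s_{(k)}]$ and therefore to the right of $s_i$ at distance at least $s_{(k-1)} - s_i$, with equality attained by truthful reporting; single-valleyedness of $\cost_i$ then gives that the truthful cost is minimal. The case $s_i > s_{(k)}$ is symmetric.

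For the second part, if $\calM = \sum_j p_j \calM_j$ is a convex combination of $k_j$-th ordered statistics, then every realization of $\calM$ coincides with some $\calM_j$; each $\calM_j$ satisfies Inequality~\eqref{eq:truthfulness} by the first part, so the inequality holds pointwise over the support of $D_{\calM}(\ical)$, which is precisely universal truthfulness. The main technical step in the whole argument is the median-of-three characterization of the mechanism's response function; once established, everything else reduces to a routine monotonicity argument using the V-shaped cost.
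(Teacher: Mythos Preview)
Your proof is correct and follows essentially the same approach as the paper: both rely on the observation that an agent's cost is single-valleyed in the output position and that a unilateral misreport can only shift the $k$-th order statistic weakly away from the agent's true position. Your version is more explicit --- you write down the cost formula $\cost_i(C) = \min(|s-s_i|,1)$ and the median-of-three response function $\text{med}(s_{(k-1)}, s_i', s_{(k)})$ --- whereas the paper argues the same two facts informally (``the outcome will not change'' / ``farther from $j$'s true interval''), but the underlying logic is identical.
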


\begin{proof}
Let $\calM$ be the $k$-ordered statistic mechanism outputs the interval of the $k$-th ordered agent $i$. Clearly, agent $i$ has cost $0$ as her interval is completely covered.
Consider the $\ell$-th ordered agent $j$ with $\ell < k$; the case $\ell > k$ is essentially symmetric. 
If $j$ reports any position $s_j' < s_i$, then the outcome of the mechanism will not change. 
If $j$ reports some position $s_j' \geq s_i$, then the mechanism will place the covering interval so that it starts at some position $x \geq s_i$, which cannot decrease the cost of $j$ as it is farther from $j$'s true interval. Hence, $j$ has not incentive to misreport and the mechanism is truthful. 

Finally, observe that, since any $k$-th ordered statistic mechanism is truthful, any convex combination of such mechanisms is universally truthful by definition. 
\end{proof}


Before we proceed with the design of truthful mechanisms, we state and prove the following statement, which establishes that the purely algorithmic version of the problem, without any regard to agent incentives, can be solved in polynomial time with respect to the social cost and the max cost. We refer to this problem as the {\sc Interval Covering Problem}. 

\begin{theorem}\label{prop:algorithmic-easy}
The social cost-minimizing position and the max cost-minimizing position for the covering interval in the {\sc Interval Covering Problem} can be computed in linear time.
\end{theorem}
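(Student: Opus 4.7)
The plan is to exploit the piecewise linear structure of the social cost as a function of the left endpoint $s$ of the covering interval $C = [s, s+1]$.

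First, I would rewrite the cost of each agent as
\[
\cost_i(s) = 1 - \max(0, 1 - |s - s_i|) = \min(1, |s - s_i|),
\]
so that $\SC(s) = \sum_{i \in N} \min(1, |s - s_i|)$ is continuous and piecewise linear, with slope changes only at points of the set $B = \{s_i - 1, s_i, s_i + 1 : i \in N\}$. Because $\SC$ is affine between consecutive elements of $B$ and tends to $n$ as $|s| \to \infty$, its global minimum must be attained at some element of $B$. Next, I would classify the slope change at each type of breakpoint by direct inspection: crossing $s_i - 1$ from left to right, the contribution of agent $i$ switches from $1$ to $s_i - s$, so the slope drops by $1$; crossing $s_i$, agent $i$'s contribution switches from $s_i - s$ to $s - s_i$, so the slope rises by $2$; and crossing $s_i + 1$, it reverts to $1$, so the slope drops by $1$ again.

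The algorithm then proceeds as follows. Because the agents are indexed in non-decreasing order of $s_i$, the three sequences $(s_i - 1)_{i=1}^n$, $(s_i)_{i=1}^n$, and $(s_i + 1)_{i=1}^n$ are each already sorted, so a three-way merge produces a sorted list $b_1 \le b_2 \le \cdots \le b_{3n}$ of all breakpoints (tagged by type) in $O(n)$ time. I would then sweep through this list from left to right, maintaining the current slope $\sigma$ (initialized to $0$, reflecting the fact that $\SC(s) = n$ for $s < b_1$) together with the current value of $\SC$. At each breakpoint $b_k$, I first compute $\SC(b_k) = \SC(b_{k-1}) + \sigma \cdot (b_k - b_{k-1})$, then update $\sigma$ by $-1$, $+2$, or $-1$ according to the type of $b_k$, and keep a running record of the breakpoint of minimum cost. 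The sweep runs in $O(n)$ time, for an overall $O(n)$ running time.

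There is no substantial obstacle: the crux is the slope analysis, which reduces to a short case split, and the usual $\Theta(n \log n)$ sorting cost is avoided because the input is already sorted by $s_i$. The only mild subtlety is handling coincident breakpoints (e.g.\ when $s_i = s_{i+1}$ or $s_i + 1 = s_{i+1}$), which is addressed by processing all coincident breakpoints together before moving on to the next linear segment; this does not change the asymptotic running time.
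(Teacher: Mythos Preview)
Your proof is correct. It differs from the paper's argument in both the structural step and the algorithmic step.

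For the structural step, the paper uses a shifting (exchange) argument: if the covering interval neither starts nor ends at an agent endpoint, compare the number of agents whose right endpoint lies inside $C$ with the number whose left endpoint does, and slide $C$ toward the larger side without increasing the social cost until an endpoint is hit. You instead observe directly that $\SC(s)=\sum_i \min(1,|s-s_i|)$ is piecewise linear with breakpoints $\{s_i-1,s_i,s_i+1\}$, which yields the same candidate set. The two arguments are equivalent in spirit (your slope at $s$ is exactly $|R|-|L|$ in the paper's notation), but yours makes the slope arithmetic explicit.

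For the algorithmic step, you are actually more careful than the paper. The paper says one ``simply checks all the points of $S$'' in linear time, but does not spell out how to avoid the naive $O(n^2)$ evaluation. Your three-way merge of the already-sorted sequences $(s_i-1)$, $(s_i)$, $(s_i+1)$, followed by a single left-to-right sweep maintaining the running slope and value, gives a genuine $O(n)$ procedure and fills that gap. The only feature the paper's argument has that yours does not is that the shifting argument works verbatim for arbitrary (not necessarily unit) interval lengths, whereas your cost formula $\min(1,|s-s_i|)$ is specific to the equal-length setting of the theorem; this is not a defect, just a difference in generality.
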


\begin{proof}
For the social cost, let $S$ be the set of left and right endpoints of the intervals of all agents. We will prove that the social cost-minimizing position of the covering interval starts or ends at a point in $S$. From that, it follows that the optimal covering interval can be computed in linear time by simply checking all the points of $S$. Suppose that there is an instance in which the optimal position of the covering interval does not start or end at some point of $S$. Let $L$ and $R$ be the sets of agents whose rightmost and leftmost endpoints are covered by this interval, respectively; so any agent of $L$ starts before the interval, any agent of $R$ starts after the interval, and all of them have positive intersection with the interval. Clearly, either $|L| \geq |R|$ or $|R|>|L|$. If $|L| \geq |R|$, we can shift the interval towards the left until we meet the rightmost endpoint of one of the agents in $L$. As we do this, the social cost decreases due to increasing the intersection with the agents of $L$ at least as much as it increases due to decreasing the intersection with the agents of $R$, and thus the this new interval must also be optimal. The case $|R| > |L|$ is similar with the only difference that the interval can be moved towards the right. 

For the max cost, the max cost-minimizing position of the covering interval is $(s_\ell+s_r)/2$, where $s_\ell$ and $s_r$ are the starting positions of the leftmost and the rightmost agent intervals. Clearly, if the optimal interval starts at some position smaller than $s_\ell$ or larger than $s_r$, then the interval that starts at exactly $s_\ell$ or exactly $s_r$, respectively, leads to at most the same max cost and is thus optimal as well. Now, suppose that the optimal interval starts at a position $(s_\ell+s_r)/2 + x$ for some $x > 0$; the case $x<0$ is symmetric. Then, the max cost is equal to the cost of the leftmost agent, which is equal to $\min\{ 1, (s_r-s_\ell)+x\}$. Consequently, the cost is minimized for $x=0$.
\end{proof}

\section{Social Cost: Deterministic Mechanisms}\label{sec:deterministic}
We start by showing bounds on the approximation ratio of deterministic truthful mechanisms for the social cost. 
As mentioned in \Cref{sec:our-results}, for this case we obtain a tight bound of $2-2/n$ on the approximation ratio achievable by any such mechanism. The mechanism that achieves this bound is the {\sc Median} mechanism, the $k$-ordered statistic mechanism with $k = \lfloor n/2 \rfloor$ (see \Cref{def:kth-ordered-statistic}). Similar mechanisms (that choose the reported action of the median agents) have played a prominent role in other domains in mechanism design without money \cite{chan2021mechanism}. However, as we mentioned earlier, the nature of our problem is different from those, and hence the proof is also rather different. 

\begin{theorem}\label{thm:median-UB}
The {\sc Median} mechanism achieves an approximation ratio of $2-\frac{2}{n}$ for the TIC problem. 
\end{theorem}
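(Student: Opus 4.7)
The plan is to compare $\SC(\calM(\ical))$ with $\SC(O)$ by tracking how each agent's cost changes between the two covering intervals. Let $m = s_k$ (with $k = \lfloor n/2 \rfloor$) be the left endpoint chosen by the Median mechanism and $o$ that of some social-cost optimal interval, and set $d = |m - o|$; if $d = 0$ the claim is immediate, so I assume $d > 0$. I would case-split on whether $o > m$ or $o \leq m$. These cases are \emph{not} symmetric, because the median index $k$ is biased toward the left (there can be more agents with $s_i \geq m$ than with $s_i < m$), so the binding case is $o > m$; the case $o \leq m$ is analogous and yields a strictly smaller ratio.

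For the case $o > m$, partition the agents into those strictly left of $m$ (set $L$), those at $m$ (set $T$, which contains at least the median agent), and those strictly right of $m$ (set $R$). The key structural consequences of the median are $|L| + |T| \geq k$, $|R| \leq n - k$, and $|T| \geq 1$. I would then bound the per-agent cost difference $\Delta_i := \cost_i(\calM(\ical)) - \cost_i(O)$. For $i \in L$, since $o - s_i > m - s_i$ and $\cost_i$ is monotone in distance, $\Delta_i \leq 0$. For $i \in T$, $\cost_i(\calM(\ical)) = 0$ and $\cost_i(O) = \min(1, d)$, so $\Delta_i = -\min(1, d)$. For $i \in R$, the $1$-Lipschitzness of $s \mapsto \min(1, |s_i - s|)$ combined with the cap at $1$ gives $\Delta_i \leq \min(1, d)$. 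Summing, $\SC(\calM(\ical)) - \SC(O) \leq (|R| - |T|) \min(1, d) \leq (n - k - 1)\min(1, d)$.

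For the lower bound on $\SC(O)$, I would use that every $i \in T$ contributes $\min(1, d)$ and every $i \in L$ contributes $\cost_i(O) = \min(1, o - s_i) \geq \min(1, d)$ because $o - s_i > o - m = d$. Hence $\SC(O) \geq (|L| + |T|) \min(1, d) \geq k \min(1, d)$. Combining the two bounds yields $\SC(\calM(\ical))/\SC(O) \leq 1 + (n - k - 1)/k = (n - 1)/k$, which equals $2 - 2/n$ for $k = n/2$. The main obstacle I expect is the regime $d > 1$, where the Lipschitz bound $\Delta_i \leq d$ is loose and the cap $\Delta_i \leq 1$ must be invoked instead; phrasing every inequality through $\min(1, d)$ unifies the two regimes, because the lower bound $\cost_i(O) \geq \min(1, d)$ continues to hold when the minimum is attained by the cap rather than by the distance. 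A minor subtlety is the handling of ties at $m$: increasing $|T|$ only tightens both sides of the bound simultaneously, so the worst case is $|T| = 1$, under which the chain of inequalities above is tight at the claimed ratio.
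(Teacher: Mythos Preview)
Your argument is correct and is genuinely different from the paper's proof. The paper parametrizes by the overlap $x=|I_m\cap O|$, partitions the agents into five classes ($A,B,M,\Gamma,\Delta$) according to where they sit relative to the segment $[s_o,t_m]$, writes explicit upper and lower bounds on $\SC(I_m)$ and $\SC(O)$ as functions of $x$, the $x_i$'s, and the class sizes, and then optimizes the resulting rational function variable by variable until only $x$ remains and the bound $2-2/n$ drops out at $x=0$. You instead work directly with the per-agent differences $\Delta_i=\cost_i(\calM(\ical))-\cost_i(O)$, exploit that $s\mapsto\min(1,|s_i-s|)$ is $1$-Lipschitz and capped at $1$ to get $\Delta_i\le\min(1,d)$ on $R$ and $\Delta_i\le 0$ on $L\cup T$, and pair this with the single-line lower bound $\SC(O)\ge k\min(1,d)$ coming from the $k$ agents weakly left of the median. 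Your route is shorter and avoids the multivariate optimization entirely; the paper's route, on the other hand, makes the structure of the extremal instance (all of $B$ empty, $x=0$, the median isolated) explicit along the way, which is useful context for the matching lower bound in the next theorem. Both proofs tacitly assume $n$ even (the paper states this explicitly; your final step ``$(n-1)/k=2-2/n$ for $k=n/2$'' does so implicitly), so neither covers odd $n$ with $k=\lfloor n/2\rfloor$ as defined, but this is a shared caveat rather than a defect of your approach.
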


\begin{proof}
Consider an arbitrary instance. 
Let $m$ be the median agent so that $I_m = [s_m,t_m]$ is the unit-size interval that is chosen by the mechanism.
Let $O = [s_o,t_o]$ be the optimal unit-size interval.  
Without loss of generality, we can assume that $s_o \geq s_m$ and that $n$ is even.
Let $x \in [0,1]$ be the length of the intersection $I_m \cap O = [s_o,t_m]$ between the interval chosen by the mechanism and the optimal interval.
Let $L$ be the set of $n/2-1$ agent at the left of $m$, $M$ the set of agents between $m$ and $s_o$, and $R$ the set of agents at the right of $s_o$; note that $|M \cup R| = n/2$. Clearly, the maximum cost of the mechanism is $n$ and the minimum optimal cost is $0$. 
We make the following observations: 
\begin{itemize}[leftmargin=*]
    \item The median agent decreases the cost of the mechanism by $1$ and increases the optimal cost by $1-x$ (the part of the median interval that the optimal solution does not cover).\smallskip
    \item Any agent $i \in L$ such that $|I_i \cap [s_o,t_m]| = 0$ increases the optimal cost by $1$; let $A$ be the set of all such agents. \smallskip
    \item Any agent $i \in L$ such that $|I_i \cap [s_o,t_m]| = x_i > 0$ decreases the cost of the mechanism by at least $1-x + x_i$ and increases the optimal cost by least $1-x_i$ (since the interval of $i$ starts before $s_m$ but reaches $s_o$); let $B$ be the set of all such agents.\smallskip
    \item Any agent $i \in M$ decreases the cost of the mechanism by at least $x$. \smallskip
    \item Any agent $i \in R$ such that $|I_i \cap [s_o,t_m] = x_i > 0$ decreases the cost of the mechanism by some length $x_i \in [0,x]$ and increases the optimal cost by $1-x_i-(1-x)=x-x_i$; let $\Gamma$ be the set of all such agents. \smallskip
    \item Any agent $i \in R$ such that $|I_i \cap [s_o,t_m]| = 0$ increases the optimal cost by at least $x$; let $\Delta$ be the set of all such agents. 
\end{itemize}
Hence, we have
\begin{small}
\begin{align*}
    \SC(I_m) &\leq n-1-|B|(1-x)-\sum_{i \in B}x_i - |M|x - \sum_{i \in \Gamma} x_i\\
    \SC(O) &\geq 1-x + |A| + |B| - \sum_{i \in B} x_i + |\Gamma|x - \sum_{i \in \Gamma} x_i + |\Delta|x. 
\end{align*}
\end{small}
So, the approximation ratio is at most
\begin{align*}
    \frac{n-1-|B|(1-x)-\sum_{i \in B}x_i - |M|x - \sum_{i \in \Gamma} x_i}{1-x + |A| + |B| - \sum_{i \in B} x_i + |\Gamma|x - \sum_{i \in \Gamma} x_i + |\Delta|x}.
\end{align*}
Since the ratio is at least $1$ (by definition), it is an increasing function in terms of $\sum_{i \in B}x_i \leq |B|x$ and it terms of $\sum_{i \in \Gamma}x_i \leq |\Gamma|x$ , and is thus at most
\begin{align*}
    \frac{n-1-|B|-(|M| + |\Gamma|)x }{1-x + |A| + |B|(1-x) + |\Delta|x}.
\end{align*}
Since $|A| + |B| + 1 = n/2$ and $|M|+|\Gamma|+|\Delta|=n/2$, we further obtain 
\begin{align*}
    \frac{n-1-|B|-(n/2)\cdot x + |\Delta|x}{n/2 -x -|B|x + |\Delta|x}.
\end{align*}
This is a decreasing function in terms of $|\Delta| \geq 0$, and is thus at most
\begin{align*}
    \frac{n-1-|B| -(n/2)\cdot x }{n/2 -x -|B|x }.
\end{align*}
This is a decreasing function in terms of $|B| \geq 0$, and is thus at most
\begin{align*}
    \frac{n-1-(n/2)\cdot x}{n/2-x}.
\end{align*}
Finally, this is a decreasing function in terms of $x$ and thus attains its maximum value of $2-2/n$ when $x=0$. 
\end{proof}

Next, we present a lower bound for deterministic truthful mechanisms that matches the upper bound of \Cref{thm:median-UB}.
Before we do so though, we will provide a structural property of any deterministic truthful mechanism. This property will be repeatedly used in order to prove the lower bound. 

\begin{lemma}
\label{lem:det-struct}
Consider a deterministic truthful mechanism $\calM$, an instance $\ical$, and an agent $i$ such that $I_i \cap \calM(\ical) \neq \varnothing$. 
In addition, consider the instance $\ical' = (I'_i, \ical_{-i})$, where $I'_i \cap I_i \cap \calM(\ical) \neq \varnothing$, and let $\calM(\ical')$ be the location of the covering interval in $\ical'$ under mechanism $\calM$. Then, it must hold that $I'_i \cap \calM(\ical') \neq \varnothing$.
\end{lemma}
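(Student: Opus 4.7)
The plan is to prove the lemma by contradiction, applying the truthfulness condition \eqref{eq:truthfulness} in a ``reversed'' direction: I would treat $\ical'$ as the true instance and view a deviation from $I'_i$ to $I_i$ as a potential misreport. Concretely, suppose toward contradiction that $I'_i \cap \calM(\ical') = \varnothing$, and consider the hypothetical scenario in which the $i$-th agent's true interval is $I'_i$. Under the output $\calM(\ical')$, this agent's cost would be $1 - |I'_i \cap \calM(\ical')| = 1$, the maximum possible.

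Next, I would have this same agent (with true interval $I'_i$) misreport $I_i$ instead. This deviation turns the input into $(I_i, \ical'_{-i}) = \ical$ and produces output $\calM(\ical)$. Truthfulness in the instance $\ical'$ requires the honest report to weakly dominate this deviation, so the agent's cost at $\calM(\ical)$, still measured against her true interval $I'_i$, must be at least $1$ as well. This forces $|I'_i \cap \calM(\ical)| = 0$, i.e., $I'_i \cap \calM(\ical) = \varnothing$. But the hypothesis $I'_i \cap I_i \cap \calM(\ical) \neq \varnothing$ immediately implies $I'_i \cap \calM(\ical) \neq \varnothing$, yielding the contradiction.

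The only subtle point is the symmetric invocation of truthfulness. The definition in \eqref{eq:truthfulness} is universally quantified over every agent, every ``true'' profile, and every possible alternative report, so there is no issue using it with $I'_i$ in the role of the truthful report and $I_i$ in the role of the possible deviation. Beyond this bookkeeping, the argument is essentially a one-line application of incentive compatibility, and I do not anticipate any serious obstacle.
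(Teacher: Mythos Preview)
Your proposal is correct and essentially identical to the paper's own proof: both argue by contradiction, treat $\ical'$ as the true profile with agent $i$'s interval being $I'_i$, observe that the assumed $I'_i \cap \calM(\ical') = \varnothing$ gives cost $1$, and then note that reporting $I_i$ would yield cost strictly below $1$ because $I'_i \cap I_i \cap \calM(\ical) \neq \varnothing$ implies $I'_i \cap \calM(\ical) \neq \varnothing$. The only cosmetic difference is that the paper phrases the contradiction as ``the deviation strictly lowers cost, violating truthfulness,'' while you phrase it as ``truthfulness forces the deviation cost to be $1$, contradicting the hypothesis''; these are equivalent.
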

\begin{proof}
Assume by contradiction that for the instance $\ical'$ it holds that $I'_i \cap I' = \varnothing$. Thus, on instance $\ical'$ agent $i$ incurs a cost of $1$. In this case though, agent $i$ could report $I_i$ as their true interval, which would force the mechanism to locate the covering interval at $\calM(\ical)$. This in turn implies that agent $i$ incurs cost strictly less than $1$, since $I'_i \cap I_i \cap \calM(\ical) \neq \varnothing \Rightarrow I'_i \cap \calM(\ical) \neq \varnothing$.
\end{proof}

\begin{theorem}
\label{thm:2-LB}
Let $\calM$ be any deterministic truthful mechanism. Then the approximation ratio of $\calM$ is at least $2-\frac{2}{n}$.
\end{theorem}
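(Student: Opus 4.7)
The plan is to prove the lower bound via a case analysis on the behavior of $\calM$ on a specific hard instance, leveraging the structural property of \Cref{lem:det-struct} to constrain $\calM$'s outputs on related instances.

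The hard instance is $\ical^{*}$, with $n/2 - 1$ agents reporting the interval $[0, 1]$, a single middle agent reporting $[1, 2]$, and $n/2$ agents reporting $[2, 3]$. A direct computation shows that the optimal social cost on $\ical^{*}$ is $n/2$, attained by the covering interval $[2, 3]$ (the right cluster is fully covered, while left agents and the middle pay cost $1$ each). The covering interval $[1, 2]$ fully covers only the middle agent and touches the left and right groups only at single points, giving social cost $n - 1$ and thus a ratio of $(n-1)/(n/2) = 2 - 2/n$. Hence, if $\calM(\ical^{*}) = [1, 2]$, the lower bound is immediate; note also that this is precisely the placement that the {\sc Median} mechanism would select on $\ical^{*}$, consistent with the tightness of the bound shown in \Cref{thm:median-UB}.

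Otherwise, the assumption of approximation ratio strictly below $2 - 2/n$ restricts $\calM(\ical^{*})$ to lie outside a neighborhood of $[1, 2]$. I would then do a case analysis on the position of $\calM(\ical^{*})$ (roughly: it lies strictly to the left of the middle agent's interval, strictly to the right of it, or overlaps it on one side) and, for each case, construct a modified instance $\ical'$ by changing a single agent's reported interval --- typically the middle agent, or one of the endpoints of the left/right clusters. Applying \Cref{lem:det-struct} to this modified agent pins $\calM(\ical')$ down to intersect a specific interval, and then a fresh computation of the optimal cost on $\ical'$ shows that this forced output yields ratio at least $2 - 2/n$, contradicting the assumption.

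The main obstacle is twofold. First, the case analysis must cover every possible position of $\calM(\ical^{*})$ with an appropriately tailored modification; choosing the ``right'' modification in each case is the crux of the argument. Second, several relevant intersections in $\ical^{*}$ occur at single points (e.g., the middle agent's interval touches each cluster only at one endpoint), where the hypothesis of \Cref{lem:det-struct} --- which implicitly requires positive-length intersection so that the cost drops strictly below $1$ --- fails. I would handle this by first carrying out the argument on an $\epsilon$-perturbed version of $\ical^{*}$ (with the left and right clusters separated from the middle by $\epsilon > 0$ so that all relevant intersections have positive length), obtaining a lower bound of $2 - 2/n - O(\epsilon)$, and then taking $\epsilon \to 0$ to recover the tight constant $2 - 2/n$.
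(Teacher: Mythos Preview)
Your proposal has a genuine gap: a single hard instance together with a \emph{single} agent modification cannot deliver the bound, and this is not merely a matter of filling in details. On your instance $\ical^{*}$, nothing prevents a truthful mechanism from outputting the optimal interval $[2,3]$; any $k$-th ordered statistic with $k>n/2$ does exactly this. In that case your plan is to ``construct a modified instance $\ical'$ by changing a single agent's reported interval'' and invoke \Cref{lem:det-struct}. But \Cref{lem:det-struct} only forces $\calM(\ical')$ to intersect the moved agent's new interval $I_i'$, and since $I_i'$ must itself intersect $I_i\cap\calM(\ical^{*})\subseteq[2,3]$, the conclusion is compatible with $\calM(\ical')$ remaining at or near $[2,3]$ --- which is still optimal (or near-optimal) on any instance obtained from $\ical^{*}$ by perturbing one agent. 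One step simply does not create enough separation between the mechanism's forced region and the large cluster it is already covering. Your $\epsilon$-perturbation idea addresses only the degenerate zero-length intersections, not this structural shortfall.

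The paper's proof is essentially different in architecture: it starts from two clusters placed at distance $\Theta(n)$ apart (at $[0,1]$ and $[n,n+1]$) and then performs a \emph{sequence} of up to $n/2-1$ single-agent modifications. Each step peels off the rightmost agent currently intersecting the covering interval and slides it just past the interval's right endpoint; \Cref{lem:det-struct} then forces the covering interval to follow, but only by at most one unit per step. An induction over these families of instances shows that after $k$ steps the mechanism can intersect at most $n/2-k$ agents while the optimum stays at $n/2$. The iterative nature --- and the $\Theta(n)$ initial separation that makes room for it --- is what your single-shot plan is missing.
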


\begin{proof}
Let $\calM$ be any deterministic truthful mechanism. At a high level, the proof will construct a series of instances and will use the truthfulness of $\calM$ to argue about the possible positions of the covering interval on each one of those instances.

The starting point is the following instance $\ical^0$, with two groups of agents: group $\C_0$ contains $\frac{n}{2}$ agents with $I_i=[0,1]$ for all $i \in \C_0$ and group $\C_1$ contains $\frac{n}{2}$ agents with $I_i=[n, n+1]$ for all $i \in \C_1$. 
Without loss of generality, we will assume that on instance $\ical^0$, mechanism $\calM$ locates the covering interval $[a_0,b_0]$ such that it covers some part of the intervals of the agents from cluster $\C_0$; the other case is symmetric. 
Observe here that throughout the proof it is without loss of generality to assume that the covering interval {\em always} covers a strictly positive part of an agent; if this was not the case, then the optimal cost would be $n/2$ while the mechanism would achieve cost of $n$ and thus it would be 2-approximate.
In addition, again without loss of generality, we will assume that $0 \leq a_0 \leq 1$.
Observe that since the covering interval has length 1, then it cannot cover any agent from cluster $\C_1$.

The proof will construct a sequence of {\em families} of instances $\jcal^0, \jcal^1, \jcal^2, \ldots, \jcal^k$, where $k \leq n/2-1$, which will guarantee that, on any instance in any of these families: 
\begin{enumerate}[label=(\alph*)]
\item mechanism $\calM$ {\em cannot} place the covering interval and cover (part of) the cluster of agents located at $[n,n+1]$; 
\item mechanism $\calM$ can move the covering interval only to the right of its position in the previous instance and never to the left; 
\item the maximum approximation ratio that mechanism $\calM$ can achieve will strictly decrease, compared to the previous family.
\end{enumerate}
 
Before we present the formal argument, we define some notation that will make the exposition more clear. 
For every instance $\ical$, let $X(\ical) = \{i \in [n]: I_i \cap \calM(\ical) \neq \varnothing\}$, i.e., the set $X(\ical)$ contains the agents that have a non-empty interscetion with the covering interval $\calM(\ical)$. 

We will prove by induction that for every family $\jcal^k$, with $k \in [\frac{n}{2}-1]$, and every instance $\ical \in \jcal^k$ the following two conditions are satisfied.
 \begin{enumerate}
     \item The left endpoint of $\calM(\ical)$ is in $[k, k+1]$.
     \item On instance $\ical \in \jcal^k$, due to truthfulness, mechanism $\calM$ is able to cover a total mass of at most $n/2-k$, for a social cost of at least $n-(n/2-k)=n/2+k$, while the optimal social cost will remain the same, equal to $n/2$. Hence, when the argument reaches family $\jcal^{\frac{n}{2}-1}$, i.e., when $k=n/2-1$, the social cost of $\calM$ becomes at least $n-1$, leading to an approximation ratio of $2-2/n$. For a visual representation of instance $\ical \in \jcal^{\frac{n}{2}-1}$, see the right-hand side of \Cref{fig:randomized}.
 \end{enumerate}

 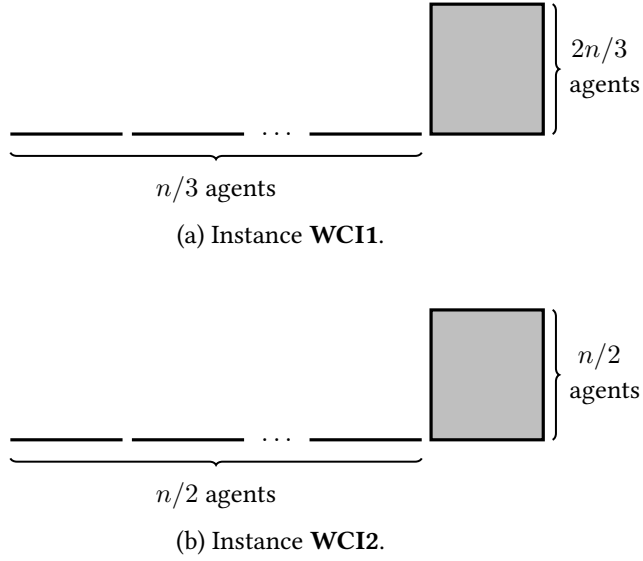
\begin{figure*}[t]
\tikzset{every picture/.style={line width=0.75pt}} 
\tikzset{
    position label/.style={
       below = 3pt,
       text height = 1.5ex,
       text depth = 1ex
    },
   brace/.style={
     decoration={brace, mirror},
     decorate
   }
}
\centering
\begin{subfigure}[t]{0.45\linewidth}
\centering
\begin{tikzpicture}[x=0.7pt,y=0.7pt,yscale=-1,xscale=1]
\draw[very thick] (0,0) -- (60,0); %
\draw[very thick] (65,0) -- (125,0); %
\draw[very thick] (142.5,0) node [inner sep=0.75pt]  [font=\small]  {$\ldots$};
\draw[very thick] (160,0) -- (220,0); %
\draw[very thick, fill=gray, fill opacity=0.5] (225,0) rectangle (285,-70);

\draw [brace] (0,10) -- (220,10);
\draw (110,30) node [inner sep=0.75pt]  [font=\small]  {$n/3$ agents};

\draw [brace] (290,0) -- (290,-70);
\draw (315,-45) node [font=\small]  {{$2n/3$}};
\draw (318,-25) node [font=\small]  {agents};
\end{tikzpicture}
\caption{Instance {\bf WCI1}.}
\label{fig:wci1}
\end{subfigure}
\\[20pt]
\begin{subfigure}[t]{0.45\linewidth}
\begin{tikzpicture}[x=0.7pt,y=0.7pt,yscale=-1,xscale=1]
\draw[very thick] (0,0) -- (60,0); %
\draw[very thick] (65,0) -- (125,0); %
\draw[very thick] (142.5,0) node [inner sep=0.75pt]  [font=\small]  {$\ldots$};
\draw[very thick] (160,0) -- (220,0); %
\draw[very thick, fill=gray, fill opacity=0.5] (225,0) rectangle (285,-70);

\draw [brace] (0,10) -- (220,10);
\draw (110,30) node [inner sep=0.75pt]  [font=\small]  {$n/2$ agents};

\draw [brace] (290,0) -- (290,-70);
\draw (315,-45) node [font=\small]  {{$n/2$}};
\draw (318,-25) node [font=\small]  {agents};
\end{tikzpicture}
\caption{Instance {\bf WCI2}.}
\label{fig:wci2}
\end{subfigure}
\caption{The two worst-case instances for the {\sc Uniform-Statistic} mechanism. In each figure there are some singleton agents and a group of agents whose intervals overal (these are depicted as a shaded rectangle).}
\label{fig:randomized}
\end{figure*}
 
\noindent
Now we are ready to complete the proof. 
Observe that by assumption Conditions 1 and 2 above hold for $\ical^0$, so they hold for the base case, $\jcal^0$, of our induction.
For the induction step, assume that Conditions 1 and 2 hold for instance $\ical \in \jcal^k$, for some $k \leq n/2-1$. 
Hence, we have that the left endpoint of $\calM(\ical)$ is in $[k, k+1]$ and that $\calM(\ical)$ intersects with at most $n/2-k$ agents, formally, $|X(\ical)| \leq n/2-k$.

In what follows, without loss of generality we will assume that $|X(\ical)| > 1$; if $|X(\ical)| \leq 1$, then $\calM(\ical)$ has zero intersection with the intervals of at least $n-1$ agents, and hence has a social cost of at least $n-1$. The optimal cost is $n/2$, and hence $M$ has an approximation ratio of at least $2-2/n$ and we are done. 

Given the above, we can pick a ``rightmost'' agent $i \in X(\ical)$ with interval $I_i=[s_i, t_i]$, i.e., an agent $i$ such that $t_i \geq t_{i'}$ for all $i' \in X(\ical)$. We define instance $\ical' = (I'_i, \ical_{-i})$ as follows, where $\text{left}(I')$ and $\text{right}(I')$ denote the left and right endpoints of interval $I'$ respectively.
\begin{itemize}[leftmargin=*]
    \item If $\leftmost(\calM(\ical)) < t_i < \rightmost(\calM(\ical))$, then $I'_i = [t_i, t_i+1]$.\smallskip
    
    \item If $ t_i \geq \rightmost(\calM(\ical))$, then $I'_i = [\rightmost(\calM(\ical)) - \delta, \rightmost(\calM(\ical))-\delta+1]$, where $\delta>0$ is an arbitrarily small quantity.
\end{itemize}
Observe that in both cases we have that $I_i \cap I'_i \cap \calM(\ical) \neq \varnothing$. Thus, from Lemma~\ref{lem:det-struct} and due to the truthfulness of mechanism $\calM$, it must hold that $I'_i \cap \calM(\ical') \neq \varnothing$. We distinguish between three cases.
\begin{itemize}
    \item $\leftmost(\calM(\ical')) < k+1$ and $|X(\ical')| > 1$. Then, we create a new instance $\ical''$ as before; formally, we set $\ical = \ical'$ and $X(\ical) = X(\ical')$ and we choose an agent from $X(\ical)$ to move.
    \item $\leftmost(\calM(\ical')) < k+1$ and $|X(\ical')| = 1$. Then, as we have argued above, the approximation ratio of mechanism $M$ is $2-2/n$.\smallskip
    \item $\leftmost(\calM(\ical')) \in [k+1, k+2]$. In this case, it holds that $\ical' \in \jcal^{k+1}$. Observe that since we have assumed that $|X(\ical)| > 1$ and that since we have created instance $\ical'$ by moving the ``rightmost'' interval of $X(\ical)$, it should hold that $|X(\ical')| \leq |X(\ical)| - 1 \leq \frac{n}{2}-k-1$, where for the last inequality we have used the induction hypothesis. 
\end{itemize}
This completes the induction step and thus it completes the proof.
 \end{proof}

\section{Social Cost: Randomized Mechanisms} \label{sec:randomized}
In this section, we turn our focus to randomized mechanism for the social cost. 
We present a simple randomized, universally truthful mechanism that achieves an approximation ratio of $5/3$, thus outperforming all deterministic truthful mechanisms.
In particular, we consider the following mechanism, which we coin {\sc Uniform-Statistic}: 

\begin{definition}[{\sc Uniform-Statistic}]
Let $\ell$ be the $(n/3)$-th leftmost agent, $m$ be the median agent, and $r$ be the $(2n/3)$-th leftmost agent. The mechanism places the covering interval at the starting position of each of $\{\ell,m,r\}$ with probability $1/3$. 
\end{definition}

The mechanism is a convex combination of $k$-ordered statistics, and hence by \Cref{lem:kth-ordered-statistics-truthful}, it is universally truthful. What remains is to bound its approximation ratio, established by the following theorem.

\begin{theorem} \label{thm:randomized}
The approximation of {\sc Uniform-Statistic} is at most $5/3$. 
\end{theorem}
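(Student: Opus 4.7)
My plan is to prove the stronger inequality $C_\ell + C_m + C_r \leq 5\, C^*$, where $C_x := \SC(I_x)$ for $x \in \{\ell, m, r\}$ and $C^* = \SC(O)$; dividing by $3$ then yields the claimed approximation ratio of $5/3$ for the expected social cost. By the left--right symmetry of the mechanism (reflecting the line interchanges the $(n/3)$-th and $(2n/3)$-th ordered statistics while preserving the median), I will assume WLOG that $s_o \geq s_m$.

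I then plan a case analysis on the position of $s_o$: either (A) $s_m \leq s_o \leq s_r$, or (B) $s_o > s_r$. In each case, I will partition the agents into groups according to the ordering of their positions $s_i$ relative to the four ``anchor'' points $s_\ell, s_m, s_r, s_o$. The quantile structure fixes the sizes of some of these groups (roughly, $n/3$ agents have $s_i \leq s_\ell$, $n/2$ have $s_i \leq s_m$, and $2n/3$ have $s_i \leq s_r$). For each group, I will express each agent's contribution to $C_\ell, C_m, C_r,$ and $C^*$ as a piecewise-linear function of the gaps between consecutive anchors (e.g., $a = s_m - s_\ell$, $b = s_o - s_m$, $c = s_r - s_o$ in Case A), then sum across groups and manipulate algebraically to derive the desired inequality in the gap parameters, exactly mirroring the style of the deterministic median bound in \Cref{thm:median-UB}.

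The main obstacle is that no pointwise bound of the form $c_i(s_\ell) + c_i(s_m) + c_i(s_r) \leq 5\, c_i(s_o)$ holds: an agent with $s_i \approx s_o$ can have $c_i(s_o) = 0$ while still paying positive cost at each of $s_\ell, s_m, s_r$. The argument must therefore amortize the mechanism cost of such ``well-covered'' agents against the OPT cost of agents far from $s_o$, each of whom contributes at least $1$ to $C^*$. In Case A, this balance exploits that if OPT covers many agents near $s_o$ then most of the $n/3$ agents to the left of $s_\ell$ are at distance greater than $1$ from $s_o$ and thus charge $C^*$ by $1$ each; in Case B, one additionally uses that if $s_o - s_r$ is large then even agents close to $s_r$ already contribute substantially to $C^*$. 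Verifying that the algebra yields exactly the constant $5$ rather than a larger one requires careful use of all three quantile counts; the tightness is witnessed by the two instances \textbf{WCI1} and \textbf{WCI2} depicted in \Cref{fig:randomized}, whose approximation ratios both tend to $5/3$ as $n \to \infty$, and any putative bound smaller than $5$ in the intermediate inequality would be falsified by one of these two configurations.
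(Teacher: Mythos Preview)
Your overall scaffolding matches the paper's: the same WLOG assumption $s_o \geq s_m$, the same two-way case split on whether $s_o$ lies in $[s_m, s_r]$ or to the right of $s_r$, a partition of agents relative to the three quantile anchors, and the same recognition that no per-agent inequality holds so an amortization across groups is needed. The tight instances you point to, \textbf{WCI1} and \textbf{WCI2}, are exactly the ones the paper identifies.

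Where you diverge is in execution. The paper does \emph{not} carry out a direct algebraic bound on $C_\ell + C_m + C_r$ versus $5C^*$. Instead it proves a \emph{moving lemma}: any agent $i \notin \{\ell, m, r\}$ can be slid either toward $O$ or away from $O$ (depending on whether $x_i > 0$ and on which anchors lie between $i$ and $O$) without improving the approximation ratio. This lemma is the engine of the proof: it collapses each group to a handful of canonical positions (agents piled at $\ell$, $m$, $r$, or $O$, or pushed out as isolated singletons contributing cost~$1$ everywhere), after which every subcase reduces to an explicit ratio in two or three real parameters ($x_\ell, x_m, x_r$ and the sizes of a couple of subsets), checked to be at most $5/3$ at its boundary values. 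Each subcase terminates at \textbf{WCI1} or \textbf{WCI2}.

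Your step ``express each agent's contribution \ldots\ as a piecewise-linear function of the gaps between consecutive anchors'' is where this matters. An agent's cost at any anchor $x$ is $\min(1,\lvert s_i - s_x\rvert)$, which depends on the agent's own $s_i$, not only on the gaps $a,b,c$. Two agents in the same group generally have different cost vectors, so the group totals are not functions of the gaps alone; they carry one free parameter per agent. The amortization you sketch (``agents far to the left of $s_\ell$ each charge $C^*$ by~$1$'') is correct but does not by itself control the many intermediate agents whose overlaps with $\ell$, $m$, $r$, and $O$ are all strictly between $0$ and $1$. You need some device to eliminate those per-agent degrees of freedom---which is precisely what the paper's moving lemma supplies. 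If you add such a reduction step your plan becomes viable and essentially converges to the paper's argument; without it, the algebra as described does not close.
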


\begin{proof}
We will show that any arbitrary instance can be transformed into one of the following two possible worst-cases instances (up to symmetries), by appropriately moving the agents so that the approximation ratio of the mechanism does not decrease.
\begin{itemize}[leftmargin=*]
    \item {\bf WCI1:} The first instance is such that there are (approximately) $2n/3$ agents grouped together, while the remaining $n/3$ agents are all singletons without any intersection with any other agent. The optimal interval completely covers the group of $2n/3$ agents for a social cost of $n/3$. The output of the mechanism coincides with the optimal with probability $2/3$ (due to agents $\ell$ and $m$, or agents $m$ and $r$), and has social cost (approximately) $n$ with probability $1/3$ when it chooses a singleton. So, the approximation ratio is $2/3 + \frac13 \cdot \frac{n}{n/3} = 5/3$. See \Cref{fig:wci1}.

    \item {\bf WCI2:} The second instance is such that there are $n/2$ singleton agents (including $m$) without any intersection with any other agent, while the remaining $n/2$ agents are grouped together. Here, the optimal interval completely covers $n/2$ agents for a social cost of $n/2$. The output of the mechanism coincides with the optimal with probability $1/3$ (due to agent $r$, or agent $\ell$) and has cost (approximately) $n$ with probability $2/3$ when it chooses a singleton. So, the approximation ratio is $1/3 + \frac23 \cdot \frac{n}{n/2} = 5/3$. See \Cref{fig:wci2}.
\end{itemize}

In the initial arbitrary instance, there is an optimal position of the covering interval.
All transformations that we will do will be such that the approximation ratio does not decrease (i.e., does not get better) in terms of keeping the same optimal interval. Of course, the optimal position for the interval might be different for different instance. However, if for two instances $\ical$ and $\ical'$ we show that 
$$\frac{\SC(\calM(\ical))}{\SC(O(\ical))} \leq \frac{\SC(\calM(\ical'))}{\SC(O(\ical))},$$ 
then, since $\SC(O(\ical)) \geq \SC(O(\ical'))$, we can also conclude that
$$\frac{\SC(\calM(\ical))}{\SC(O(\ical))} \leq \frac{\SC(\calM(\ical'))}{\SC(O(\ical'))}.$$ 
Given this, when we refer to the optimal interval $O$ in the rest of the proof, we refer to the optimal position of the interval in the initial instance. 

For any agent $i\in N$, let $x_i \in [0,1]$ denote the length of the intersection between $I_i$ and the optimal interval. 
The next lemma shows that in many cases, we can move agents towards the optimal interval or away from it. 

\begin{lemma}\label{lem:moving-lemma}
Consider any agent $i \not\in \{\ell,m,r\}$. We can obtain an instance with not-better approximation ratio by moving agent $i$
\begin{itemize}
    \item[(a)] towards the optimal interval (up to not changing the order of the agents) if $x_i > 0$
    \item[(b)] towards the optimal interval (up to not changing the order of the agents) if $x_i = 0$ and no $j \in \{\ell,m,r\}$ is inbetween $i$ and the optimal interval;
    \item[(c)] away from the optimal interval (up to not changing the order of the agents) if $x_i = 0$ and the intersection between $i$ and any $j \in \{\ell,m,r\}$ does not increase. 
\end{itemize}
\end{lemma}

\begin{proof}
(a) If $x_i > 0$, then moving the interval of $i$ towards the optimal interval increases the intersection of $i$ with $O$ and thus decreases the optimal social cost. The intersection with any $j \in \{\ell,m,r\}$ may increase or decrease. If the intersection decreases, then the social cost of the mechanism decreases. If the intersection increases, then it increases with the same rate as the intersection of $i$ with $O$, and since each $j \in \{\ell,m,r\}$ is considered with probability $1/3$, the total decrease in the social cost of the mechanism is at most the decrease in the optimal social cost, leading to an instance with not-better approximation ratio. 

(b) If $x_i = 0$ and no $j \in \{\ell,m,r\}$ is inbetween $i$ and the optimal interval, then moving $i$ towards the optimal interval increases the intersection of $i$ with $O$. Until $i$ and $O$ meet, this move does not increasing the intersection between $i$ with any $j \in \{\ell,m,r\}$. After $i$ and $O$ meet, this move can increase the intersection between $i$ and $j$ at most as much as the increase of the intersection between $i$ and $O$. Overall, this leads to a new instance with not-better approximation ratio.

(c) If $x_i=0$ and there is some $j \in \{\ell,m,r\}$ is inbetween $i$ and the optimal interval, then moving the interval of $i$ away from the optimal interval does not change the contribution of $i$ to the optimal social cost, which is the maximum possible. If this move does not increase the intersection with the interval of any $j \in \{\ell,m,r\}$, then the social cost of the mechanism does not decrease (it may increase), and thus the new instance has not-better approximation ratio. 
\end{proof}

We will also use the following observation. 

\begin{lemma}\label{lem:ell-or-r-intersection}
In a worst-case instance, $x_\ell > 0$ or $x_r > 0$. 
\end{lemma}

\begin{proof}
If $x_\ell = 0$ and $x_r = 0$, then the optimal interval has no intersection with at least $2n/3$ agents. Since the social cost of the mechanism is at most $n$, this would lead to an approximation ratio of at most $3/2$.  
\end{proof}

We now partition the agents in $N \setminus \{\ell,m,r\}$ into four sets according to the order of their starting positions as follows:
\begin{itemize}
    \item $A$ includes the agents before $\ell$; 
    \item $B$ includes the agents inbetween $\ell$ and $m$;
    \item $C$ includes the agents inbetween $m$ and $r$;
    \item $D$ includes the agents after $r$. 
\end{itemize}
Note that $|A| = |D|= n/3$ and $|B| = |C|=n/6$. 
Without loss of generality, we can assume that the optimal interval starts weakly to the right of the starting position $s_m$ of $m$, i.e., $s_m \geq s_o$. We will consider two main cases depending on the relative positions of the optimal interval and the interval of agent $r$, i.e., whether $s_r \leq s_o$ or $s_o < s_r$. In each case, we will distinguish between further subcases depending on whether the optimal interval intersects with the intervals of $\ell$, $m$, and $r$. 

\bigskip
\noindent
{\bf Case 1: $s_\ell \leq s_m \leq s_r \leq s_o$.} \ \\
We first argue that the interval of any agent $i\in D$ coincides with the optimal interval in a worst-case instance. By Lemma~\ref{lem:ell-or-r-intersection}, since $s_o \geq s_r$, it has to be the case that $x_r > 0$ and also there is no $j \in \{\ell,m,r\}$ inbetween any agent $i \in D$ and $O$. Therefore, by Lemma~\ref{lem:moving-lemma}, moving each agent $i \in D$ (with $x_i>0$ or $x_i = 0$) to the optimal interval leads to an instance with not-better approximation ratio.

\bigskip
\noindent
{\bf Case 1.1: $x_\ell > 0$.}
Due to the order of the starting positions of $\ell$, $m$, $r$, and $O$, $x_\ell > 0$ implies that $x_i > 0$ for every $i \in N \setminus A$. By Lemma~\ref{lem:moving-lemma}, we can move all the agents of $B \cup C$ towards the optimal as long as the order of the agents does not change, which means that all agents of $B$ can be moved to coincide with $m$ and all agents of $C$ can be moved to coincide with $r$. Again by Lemma~\ref{lem:moving-lemma}, we can move any agent $i \in A$ with $x_i > 0$ to coincide with $\ell$ and we can move any agent $i \in A$ with $x_i = 0$ away from the optimal towards the left as a singleton. Let $A_\ell \subseteq A$ be the subset of agents of $A$ that can be moved to coincide with $\ell$. We now make the following transformations:
\begin{itemize}[leftmargin=*]
    \item All agents of $C \cup \{r\}$ can be moved to $O$. 
    When this is done, the optimal social cost decreases by $\frac{n}{6}(1-x_r)$. The social cost of the mechanism does not decrease at all due to $m$ or $\ell$ since the agents of $C \cup \{r\}$ move away from $m$ and $\ell$. The social cost of the mechanism decreases by $\frac13 \cdot \frac{n}{3}(1-x_r)$ due to the new intersection of $r$ with the agents of $D$ which are all exactly at $O$. Hence, the optimal social cost decreases by more than what the social cost of the mechanism can decrease, and thus the approximation ratio of the new instance is not smaller. \medskip

    \item All agents of $A_\ell \cup \{\ell\}$ can be moved away from $O$ as singletons. 
    When this is done, the optimal social cost increases by $|A_\ell|x_\ell$. 
    The social cost of the mechanism increases by at least $\frac13 |A_\ell|$ due to $\ell$ (who covers completely the agents of $A_\ell$ and will have no intersection with them after the move), at least $\frac13 |A_\ell|x_\ell$ due to $m$ (who intersects by at least $x_\ell$ with any agent of $A_\ell$ since $s_m \leq s_o$), and $\frac13 |A_\ell|x_\ell$ due to $r$ (who coincides with $O$). So, the overall increase of the social cost of the mechanism is at least as much of the increase of the optimal social cost, leading to a new instance with approximation ratio that is at least as large.    
\end{itemize}
Given the above properties, the optimal social cost is 
$$\frac{n}{3} + \frac{n}{6}(1-x_m),$$
whereas the social cost of the mechanism is 
$$\frac13 n + \frac13 \left(\frac{n}{3} + \frac{n}{2}(1-x_m)\right) + \frac13 \left(\frac{n}{3} + \frac{n}{6}(1-x_m) \right),$$
leading to an approximation ratio of 
$$\frac13 + \frac{1 + \frac13 + \frac12 (1-x_m) }{1 + \frac12 (1-x_m)} = \frac43 + \frac13 \cdot \frac{2}{3-x_m} \leq \frac53,$$
since $x_m \geq 1$. This essentially implies that all agents of $B \cup \{m\}$ can be moved to the optimal interval, leading to {\bf WCI1}.

\bigskip
\noindent
{\bf Case 1.2: $x_\ell = 0$.}
Note here that the case where $x_m > 0$ is already covered by Case 1.1: There, we concluded that the worst-case is when $x_\ell$ is essentially $0$. So, it suffices to consider the case where $x_m = 0$. By Lemma~\ref{lem:moving-lemma}, we can move all agents of $A \cup \{\ell\}\cup B \cup \{m\}$ away from the optimal interval to the left as singletons. We can also move every agent $i \in C$ with $x_i > 0$ to $r$, and every agent $i \in C$ with $x_i = 0$ as singleton to the left. 
Let $C_r \subseteq C \cup \{r\}$ be the set of agents that coincide with $r$, and thus have intersection $x_r$ with $O$.
The optimal social cost is 
\begin{align*}
\frac{n}{2} + |C \setminus C_r| + |C_r|(1-x_r), 
\end{align*}
whereas the social cost of the mechanism is
\begin{align*}
\frac23 n + \frac13 \left( \frac{n}{2} + |C \setminus C_r| + \frac{n}{3}(1-x_r) \right), 
\end{align*}
leading to an approximation ratio of 
\begin{align*}
    \frac{ \frac23 n + \frac13 \left( \frac{n}{2} + |C \setminus C_r| + \frac{n}{3}(1-x_r) \right) }
    {\frac{n}{2} + |C \setminus C_r| + |C_r|(1-x_r)}.
\end{align*}
This is a monotone function of $x_r$ and hence it attains its maximum value either for $x_r = 0$ or $x_r=1$. It is not hard to see that for $x_r = 0$ the bound would be at most $6/5$ (as then the optimal social cost would be $6n/5$), whereas for $x_r = 1$, since $|C \setminus C_r| \geq 0$, the approximation ratio is
\begin{align*}
\frac{ \frac23 n + \frac13 \left( \frac{n}{2} + |C \setminus C_r| \right) }
    {\frac{n}{2} + |C \setminus C_r|} \leq 
\frac{ \frac23 n + \frac13 \frac{n}{2} }
    {\frac{n}{2}}
    = 5/3.
\end{align*}
Observe that this case has led to {\bf WCI2}.

\bigskip
\noindent
{\bf Case 2: $s_\ell \leq s_m \leq s_o < s_r$.}
We now first argue that the interval of any agent $i\in C$ coincides with the optimal interval in a worst-case instance. Observe that there is no $j \in \{\ell,m,r\}$ inbetween any agent $i \in C$ and $O$. Hence, by Lemma~\ref{lem:moving-lemma}, moving each agent $i \in C$ (with $x_i > 0$ or $x_i = 0$) to the optimal interval leads to an instance with not-better approximation ratio.

\bigskip
\noindent
{\bf Case 2.1: $x_\ell > 0$ and $x_r > 0$.}
Due to the order of the starting positions of $\ell$, $m$, and $r$, we have that $x_i > 0$ for every $i \in B \cup \{m\}$.
By Lemma~\ref{lem:moving-lemma}, 
we can move any agent $i\in A$ with $x_i > 0$ to $\ell$,
any agent $i \in A$ with $x_i = 0$ as a singleton to the left,
all agents of $B$ to $m$, 
any agent $i \in D$ with $x_i > 0$ to $r$, 
and any agent $i \in D$ with $x_i = 0$ as a singleton to the right. 
Let $A_\ell \subseteq A$ and $D_r \subseteq D$ be the subsets of $A$ and $D$ that are moved to $\ell$ and $r$, respectively. 
Given all these, we have the following.
\begin{itemize}
    \item None of $\ell$, $m$, and $r$ cover the agents of $A\setminus A_\ell$ and of $D\setminus D_r$.
    \item Agent $\ell$ has intersection at least $x_\ell$ with any agent of $B \cup \{m\} \cup C$.
    \item Agent $m$ covers completely the agents of $B$, 
    has intersection at least $x_\ell$ with any agent of $A_\ell \cup \{\ell\}$, 
    intersection of length $x_m$ with the agents of $C$, 
    and intersection $\max\{0,x_m+x_r-1\}$ with the agents of $D_r \cup \{r\}$.
    \item Agent $r$ covers completely the agents of $D_r$, 
    has intersection $\max\{0,x_m+x_r-1\}$ with the agents of $B \cup \{m\}$, 
    and intersection $x_r$ with the agents of $C$.  
\end{itemize}
Putting everything together, the social cost of the mechanism is at most
\begin{align*} 
&|A\setminus A_\ell| + |D\setminus D_r| + \frac13 \cdot \bigg( \frac{n}{3} + |A_\ell| \bigg) (1-x_\ell) \\  
&+\frac13 \cdot \frac{n}{6}(2-x_m-x_r) \\
&+ \frac13 \cdot (|B| + |D_r|)(1-\max\{0,x_m+x_r-1\}) \bigg).
\end{align*}
The optimal social cost is
\begin{align*}
&|A\setminus A_\ell| + |A_\ell|(1-x_\ell) \\
&+ \frac{n}{6}(1-x_m) + |D_r|(1-x_r) + |D\setminus D_r|.
\end{align*}
The implied upper bound on the approximation ratio is a multivariate function of $x_\ell$, $x_r$, $x_m$ that is monotone in each of these variables. Consequently, its maximum value is attained at the extreme values of these variables, i.e., when $x_\ell, x_r, x_m \in \{0,1\}$, subject to the constraint $x_\ell \leq x_m$. It is not hard to verify that the combinations $(0,1,1)$, $(0,0,1)$, $(1,1,0)$ are the worst and all lead to the upper bound of $5/3$ (with an extra step of optimization in terms of sizes of $A_\ell$ and $D_r$ once the values of $x_\ell$ and $x_r$ have been settled) and that they correspond to symmetric versions of \textbf{WCI1} and \textbf{WCI2}.  

\bigskip
\noindent
{\bf Case 2.2: $x_\ell = 0$.} By Lemma~\ref{lem:ell-or-r-intersection}, it has to be the case that $x_r > 0$. Clearly, by Lemma~\ref{lem:moving-lemma}, we can move all agents of $A \cup \{\ell\}$ as singletons away from the optimal to the left. 
We consider two subcases now.

\bigskip
\noindent
{\bf Case 2.2.1: $x_m = 0$.}
Again by Lemma~\ref{lem:moving-lemma}, 
we can move all agents of $B \cup \{m\}$ as singletons away from the optimal interval to the left, 
any agent $i \in D$ with $x_i > 0$ to $r$, 
and any agent $i \in D$ with $x_i = 0$ as a singleton to the right. 
Let $D_r \subseteq D$ be the subset of $D$ that is moved to $r$.
The social cost of the mechanism is
\begin{align*}
    \frac23 n + \frac13 \bigg( \frac{n}{2} + |D\setminus D_r| + \frac{n}{6}(1-x_r) \bigg).
\end{align*}
The optimal social cost is
\begin{align*}
    \frac{n}{2} + |D\setminus D_r| + |D_r|(1-x_r).
\end{align*}
The approximation ratio is a monotone function in terms of $x_r$ and attains its maximum value for either $x_r=0$ or $x_r=1$. For $x_r=0$, the approximation ratio would be at most $6/5$ (since then the optimal cost would be at least $5n/6$), whereas for $x_r=1$, the approximation ratio is 
\begin{align*}
\frac{\frac23 n + \frac{1}{3}\cdot \frac{n}{2} + \frac13 |D\setminus D_r|}{\frac{n}{2} + |D\setminus D_r|} \leq \frac{5}{3},
\end{align*}
where the inequality follows since $|D \setminus D_r| \geq 0$. Observe that this is instance {\bf WCI2}. 

\bigskip
\noindent
{\bf Case 2.2.2: $x_m > 0$.}
By Lemma~\ref{lem:moving-lemma}, 
we can move any agent $i \in B$ with $x_i > 0$ to $m$, 
any agent $i \in B$ with $x_i =0$ as a singleton to the left,
any agent $i \in D$ with $x_i > 0$ to $r$, 
and any agent $i \in D$ with $x_i = 0$ as a singleton to the right. 
Let $B_m \subseteq B$ and $D_r \subseteq D$ be the subsets of $B$ and $D$ that are moved to $m$ and $r$, respectively.
The social cost of the mechanism is at least
\begin{small}
\begin{align*}
&\frac{n}{3} + \frac13 \bigg( \frac{n}{3} + |B\setminus B_m| + \frac{n}{6}(1-x_m) + |D\setminus D_r| + |D_r|(1-x_r) \bigg) \\
& + \frac13 \bigg( \frac{n}{3} + |B\setminus B_m| + |B_m|(1-x_m) + \frac{n}{6}(1-x_r) + |D\setminus D_r| \bigg).
\end{align*}
\end{small}
The optimal social cost is
\begin{align*}
\frac{n}{3} + |B\setminus B_m| + |B_r|(1-x_m) + |D\setminus D_r| + |D_r|(1-x_r).
\end{align*}
Again, the approximation ratio is a monotone function in each of $x_m$ and $x_r$, and so it attains its maximum value for $x_m, x_r \in \{0,1\}$. It is not hard to verify that the worst are the combinations $(1,1)$ and $(0,1)$, leading to {\bf WCI1} and {\bf WCI2}, for which the bound is $5/3$.

\end{proof}

We complement the aforementioned positive result with a lower bound of $3/2$ on the approximation ratio of any randomized truthful mechanism that is a convex combination of $k$-ordered statistic mechanisms. 

\begin{theorem} \label{thm:randomized-lower}
The approximation ratio of any convex combination of $k$-ordered statistic mechanisms is at least $3/2$.
\end{theorem}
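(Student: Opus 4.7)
The approach is to exploit the symmetry between instances in which the ``good'' (group) side lies to the left versus the right. A convex combination mechanism is fixed by a distribution $(p_1,\ldots,p_n)$ over ranks, so it cannot adapt to which side is good; one of the two symmetric instances will force it to pick into the bad side with probability at least $1/2$.

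Concretely, I would take $n$ even and consider the two instances based on \textbf{WCI2}:
\begin{itemize}[leftmargin=*]
\item $\ical_1$: $n/2$ ``singleton'' agents at positions $0,3,6,\ldots,3(n/2-1)$ (so pairwise disjoint, each at distance $>1$ from every other), followed by $n/2$ ``group'' agents all placed at $[L,L+1]$ for some large $L$.
\item $\ical_2$: the mirror image --- $n/2$ group agents at $[0,1]$, followed by $n/2$ singletons at $L,L+3,\ldots,L+3(n/2-1)$.
\end{itemize}
In both instances the optimal interval covers the group at social cost exactly $n/2$. A $k$-th ordered statistic picks a singleton when it is aligned with the singleton side (yielding social cost $n-1$, since every other agent contributes cost $1$) and picks a group agent when it is aligned with the group side (yielding social cost $n/2$, since all $n/2$ group members are covered and the singletons all contribute $1$).

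Let $\alpha=\sum_{k=1}^{n/2}p_k$ and $\beta=1-\alpha=\sum_{k=n/2+1}^{n}p_k$. In $\ical_1$ the first $n/2$ ranks correspond to singletons, so the expected cost is $\alpha(n-1)+\beta(n/2)$; in $\ical_2$ the roles of $\alpha$ and $\beta$ swap. Dividing by the optimal social cost $n/2$ gives approximation ratios
\[
\frac{\ebb[\SC(\calM(\ical_1))]}{n/2} \;=\; 1+\alpha\!\left(1-\tfrac{2}{n}\right),\qquad \frac{\ebb[\SC(\calM(\ical_2))]}{n/2} \;=\; 1+\beta\!\left(1-\tfrac{2}{n}\right).
\]
Since $\alpha+\beta=1$, at least one of $\alpha,\beta$ is $\ge 1/2$, so the worst of the two ratios is at least $1+\tfrac{1}{2}(1-2/n)=3/2-1/n$. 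Taking $n\to\infty$ (the approximation ratio is a supremum over instances) yields the claimed lower bound of $3/2$.

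I do not expect a significant obstacle: the two instances used here are exactly \textbf{WCI2} and its mirror, both of which already appear in the proof of \Cref{thm:randomized} and have straightforward cost computations. The only subtlety is to verify the singleton spacing is enough to make the mechanism's cost equal to $n-1$ whenever it picks a singleton, which is immediate from the choice of spacing $3>1$ and the choice of $L$ large enough to separate the singleton block from the group.
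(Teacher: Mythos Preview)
Your proposal is correct and follows essentially the same approach as the paper: both use the \textbf{WCI2} instance (half singletons, half grouped), split the rank probabilities into a ``left half'' mass $\alpha$ and ``right half'' mass $\beta$, and observe that whichever half has mass $\geq 1/2$ is forced onto the singleton side in the corresponding instance. The only cosmetic differences are that you explicitly write out both symmetric instances (rather than invoking ``without loss of generality $p\geq 1/2$'') and that you track the exact singleton cost $n-1$ rather than $n$, which is why you obtain $3/2-1/n$ and then pass to the supremum over $n$; the paper rounds this to $n$ and states $3/2$ directly.
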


\begin{proof}
Let $p$ be the total probability with which any of the first $n/2$ agents is chosen; hence, $1-p$ is the total probability with which any of the remaining $n/2$ is chosen. Without loss of generality, $p \geq 1/2$. Now, consider an instance in which the first $n/2$ agents are singletons, whereas the other $n/2$ agents are all grouped together. The optimal social cost is exactly $n/2$, while the expected social cost of the mechanism is $p \cdot n + (1-p)\cdot n/2$, since with probability $p$ we choose some of the first $n/2$ agents leading to social cost $n$ and with probability $1-p$ we choose some the last $n/2$ agents leading to the optimal social cost of $n/2$. Therefore, the approximation ratio is $1+p \geq 3/2$. 
\end{proof}


\section{Max Cost} \label{sec:max}
In this section, we focus on the max cost and show that the best possible approximation ratio of any deterministic mechanism is $2$, and this is achieved by any $k$-ordered statistic mechanism. 

\begin{theorem} \label{max:upper-2}
For the max cost, the approximation ratio of any $k$-ordered statistic mechanism is at most $2$.
\end{theorem}

\begin{proof}
Consider an arbitrary instance and any $k$-Statistic mechanism. The max cost of the solution computed by the mechanism as well that of the optimal solution depend on how close the intervals of the leftmost and rightmost agents are. We consider the following cases:
\begin{itemize}
    \item If the intervals of the leftmost and the rightmost agents are disjoint and the distance between them is at least $1$, then the max cost of the mechanism and the optimal max cost are both $1$; hence, the mechanism is optimal.

    \item If the intervals of the leftmost and the rightmost agents are disjoint and the distance between them is equal to $1-x$ for some $x \in (0,1)$, then the optimal interval can cover $x/2$ of each of these two agents (and any other agent inbetween them), leading to an optimal max cost of $1-x/2 \geq 1/2$. Since the max cost of the mechanism is again $1$, the approximation ratio is at most $2$ in this case.

    \item If the interval of the leftmost and the rightmost agents have an overlap of $x < 1$, then the max cost of the mechanism is $1-x$. The optimal solution can cover $x+y$ from each of the two agents, where $y$ is such that $x+2y=1 \Leftrightarrow y = \frac{1-x}{2}$, leading to an optimal max cost of $1-x-y=\frac{1-x}{2}$. Hence, the approximation ratio is $2$.
\end{itemize}  
Overall, in any case, the approximation ratio is at most $2$.
\end{proof}

Next, we show that there is no better deterministic truthful mechanism. 

\begin{theorem} \label{max:lower-deterministic-2}
For the max cost, for any $k> 0$ the approximation ratio of any deterministic truthful mechanism is at least $2-\frac{1}{k}$.
\end{theorem}

\begin{proof}
In order to prove our theorem we will produce a sequence of instances with two agents -- Left agent and Right agent -- such that the approximation ratio of any deterministic truthful mechanism will be monotonically increasing and tend to 2.  
The high level idea is that at every iteration of the sequence, either Left agent moves $\eps$ to the left, or Right agent moves $\eps$ to the right; at the same time though, due to truthfulness the mechanism can either (a) follow the agent that moves and lose a ``large'' fraction of the other agent that optimal solution covers, or (b) do not follow the agent that moves and thus lose a fraction of the agent that is covered by the optimal solution.

Formally, we will prove our claim by induction where at instance $\ical_k$ the optimal solution has cost $\frac{k\cdot \eps}{2}$, while any deterministic truthful mechanism achieves cost at least $(k-1)\cdot \eps + \frac{\eps}{2}$.
The initial instance $\ical_1$ consists of two agents -- Left agent with interval $I^l_1$ and Right agent with interval $I^r_1$ -- whose intervals have overlap $1-2\eps$, i.e. $|I^l_1 \cap I^r_1| = 1-2\eps$ and thus $|I^l_1 \cup I^r_1| = 1+2\eps$. 
Let $S_1$ be the solution some deterministic mechanism chooses and assume that $S_1$ is better than 2-approximate. Observe that at least $\frac{\eps}{2}$ from the interval of some agent must be uncovered; formally either $|I_1^l \cap S_1| \leq 1- \frac{\eps}{2}$ or $|I_1^r \cap S_1| \leq 1- \frac{\eps}{2}$. 
In order to create $\ical_2$ we pick the agent that is covered the least by the mechanism and move his interval by \eps; to keep notation simple, for every interval $I = [a,b]$ we denote $I +\eps = [a+\eps,b+\eps]$ and $I -\eps = [a-\eps,b-\eps]$.
\begin{itemize}
    \item If $|I_1^l \cap S_1| \leq 1- \frac{\eps}{2}$, then we create $\ical_2$ by moving Left agent by \eps to the left, i.e., we set $I^l_2 :=  I^l_1-\eps$, and $I^r_2 :=  I^r_1$. 
    The optimal solution achieves cost $\eps$.
    Let $S_2$ be the solution of the mechanism on $\ical_2$. 
    Observe that due to truthfulness, it must hold that $|I_1^l \cap S_2| \leq 1- \frac{\eps}{2}$; otherwise Left agent could misreport and decrease his cost. 
    Hence, there are two possibilities for $S_2$. 
    (a) it moves to left of $S_1$. Then, since $|I_1^l \cap S_2| \leq 1- \frac{\eps}{2}$, it must hold that $|I_2^r \cap S_2| \leq 1- \frac{3\eps}{2}$.
    (b) it moves (weakly) to the right of $S_1$. Then, it is true that $|I_2^l \cap S_2| \leq 1- \frac{3\eps}{2}$. in any case the mechanism achieves approximation $\frac{3\eps}{2}/\eps = \frac{3}{2}$.
    \item Else, we create $\ical_2$ by moving Right agent by \eps to the right, i.e., $I^r_2 := I^r_1 + \eps$, and $I^l_2 := I^l_1$. Then, the optimal cost is $\eps$ and due to truthfulness, it must hold that $|I^r_1 \cap S_2| \leq 1-\frac{\eps}{2}$. Using identical arguments as before we can prove that no matter how the mechanism will choose $S_2$, the approximation will be at least $3/2$.
\end{itemize}

For the induction step, assume that we have instance $\ical_k$ where: $|I_k^l \cap I_k^r| = 1-2k\cdot \eps$ and that solution $S_k$ of the truthful deterministic mechanism satisfies either 
$|I_k^l \cap S_k| \leq 1- (k-1)\cdot \eps - \frac{\eps}{2}$ or 
$|I_k^r \cap S_k| \leq 1- (k-1)\cdot \eps - \frac{\eps}{2}$; this can be ensured by creating $\ical_k$  form $\ical_{k-1}$ by continuously moving the interval of an agent until one of the two conditions is satisfied, since if the mechanism does not achieve any of the above then it will achieve approximation ratio strictly larger than 2.
Observe that the cost of the optimal solution is $\frac{k\cdot\eps}{2}$.
Without loss of generality, assume that $|I_k^l \cap S_k| \leq 1- (k-1)\cdot \eps - \frac{\eps}{2}$. 

Now we will explain how to get instance $\ical_{k+1}$. 
Without loss of generality, assume that $|I_k^l \cap S_k| \leq 1- (k-1)\cdot \eps - \frac{\eps}{2}$. We will move left agent $\eps$ to the left. 
Hence, we set $I_{k+1}^l := I_k^l -\eps$ and $I_{k+1}^r := I_k^r$. 
This means that $|I_{k+1}^l \cap I_{k+1}^r| = 1-2(k+1)\cdot \eps$. So, the cost of the optimal solution is $\frac{(k+1)\cdot\eps}{2}$.
Let $S_{k+1}$ be the solution that the deterministic mechanism chooses for $\ical_{k+1}$.
Due to truthfulness, it must be true that $|I_k^l \cap S_{k+1}| \leq 1- (k-1)\cdot \eps - \frac{\eps}{2}$. 
Using the above and the exact same arguments as in the base case, it must be true that either 
$|I_{k+1}^l \cap S_{k+1}| \leq 1- k\cdot \eps - \frac{\eps}{2}$ (when the mechanism does not move the solution to the left), or 
$|I_{k+1}^r \cap S_{k+1}| \leq 1- k\cdot \eps - \frac{\eps}{2}$ (when the mechanism does move the solution to the left). 
Thus, in any case, the approximation ratio of the mechanism is bounded by $\frac{k\cdot\eps+\frac{\eps}{2}}{\frac{(k+1)\cdot \eps}{2}} = 2- \frac{1}{k}$.
\end{proof}

Our next result shows that no randomized mechanism that is a convex combination of $k$-ordered mechanism can achieved an approximation ratio better than $2$.

\begin{theorem} \label{max:lower-randomized-class-2}
For the max cost, the approximation ratio of any mechanism that is a convex combination of $k$-ordered mechanisms is at least $2$.
\end{theorem}

\begin{proof}
Consider any mechanism that is a convex combination of $k$-ordered statistic mechanisms, and the following instance: Half of the agents have the interval $[0,1]$ and the other half of the agents have the interval $[1,2]$. For this instance, the interval is either places at $0$ (with total probability equal to the probability assigned to all $(k,0)$-mechanisms with $k \leq n/2$) or $1$ (with total probability equal to the probability assigned to all $(k,0)$-mechanisms with $k > n/2$). Hence, the expected max cost is $1$. On the other hand, the optimal solution is to place the interval at $1/2$ to cover half of the interval of each agent for a max cost of $1/2$, thus leading to an approximation ratio of $2$.
\end{proof}

\section{Two Natural Extensions}\label{sec:extensions}
Our results so far concerned the case of known intervals of equal length and settled the problem for deterministic truthful mechanisms, while the best possible approximation ratio for randomized mechanisms is still to be determined. We now present two very natural extensions of the main model which could be better fitted to several of the potential applications of the problem. In particular, we first consider the setting where the lengths of the agent intervals are private information; for this model, it turns out that meaningful approximations and truthfulness are incompatible. Second, we consider the case where the lengths of the agent interval are known but unequal; for this, we show that a finite approximation ratio is possible for the social cost, and we also identify the best possible mechanism for the max cost. 


\subsection{Unknown interval lengths} 
In general, it seems natural to assume that the length of the agent intervals, as well as their positions, could constitute reported information. In this case, however, we prove the following stark impossibilities for the social cost and the max cost.

\begin{theorem}\label{thm:unknown-lengths}
For the social cost, when the lengths of the intervals are unknown, the approximation ratio of any randomized truthful in expectation mechanism is $\Omega(1/\eps)$, for any $\varepsilon \in (0,1)$.
\end{theorem}

\begin{proof}
To begin with, we will prove the lower bound for any deterministic mechanism $\calM$ and then explain how we can suitably augment the idea in order to get the bound for randomized mechanisms. Consider an instance $\ical$ with two agents, where the leftmost agent is associated with the interval $I_1=[0,1]$ and the rightmost agent is associated with the interval $I_2=[3, 3+\eps]$. In addition, assume that the covering interval $C$ has length $1$; clearly $C$ can have a non-zero intersection with at most one of the agents' intervals. 
Given $\ical$, $\calM$ must locate $C$ such that it intersects $[0,1]$; otherwise, its approximation ratio would be larger than $\frac{1}{\eps}$.
In fact, it has to be the case that $|C \cap [0,1]| > \eps$.
Without loss of generality, we can assume that $0 \in C$; that is, $C$ covers the left-agent from the left. 
Now, consider the instance $\ical'$, where the left-agent has the interval $[0,\eps^2]$ while the right-agent still has the interval $[3, 3+\eps]$. Here, the optimal cost is $\eps^2$, achieved when the right-agent is completely covered.
Observe that, since $|C \cap [0,1]| > \eps$ and $0 \in C$, $[0,\eps^2]$ is contained in the covering interval of the mechanism for instance $\ical$. We argue that, due to truthfulness, $\calM$ has to locate $C$ at $\ical'$ such that $[0,\eps^2] \subset C$ as well; otherwise, the left-agent could declare $[0,1]$ as her interval, lead to $\ical$, and thus get cost $0$.
Hence, when given instance $\ical'$ as input, mechanism $\calM$ achieves cost $\eps$, and its approximation ratio is $\frac{\eps}{\eps^2} = \frac{1}{\eps}$.

Next, we show an asymptotic bound for randomizes mechanisms. 
Let $\calM$ be any randomized, truthful in expectation mechanism. We will prove that $\calM$ is $\Omega(\frac{1}{\eps})$-approximate, for any $\eps \in (0,1)$. 
Our starting point is instance $\ical$ presented above. Observe that the minimum social cost on $\ical$ is $\eps$. 
Let $p$ be the total probability with which $M$ chooses a covering interval that has non-zero intersection with $[0,1]$ (observe that this is not necessarily a single interval, but rather a collection of intervals $j$, each chosen with a probability $p_j$, with $\sum_j p_j =p$).
If $p<\frac{1}{2}$, the expected social cost of $M$ is at least $1/2$ (since with probability at least $1/2$ it has zero intersection with $[0,1]$), and thus the approximation ratio is at least $1/2\eps$. So, let us focus on the case where $p \geq 1/2$.
Observe that at least one of the intervals $[0,\eps]$ and $[1-\eps, 1]$ has to be covered with probability at least $p/2 \geq 1/4$. 
To see this, observe that the covering interval {\em cannot} intersect with the ``inner'' interval $[\eps,1-\eps]$ without covering at least one of the two above-mentioned intervals. So, if both intervals are covered with probability smaller than $1/4$ each, we get that the cost of $\calM$ is at least $1+\eps$ and thus its approximation ratio is at least $1/\eps$. Without loss of generality assume that the interval $[0,\eps]$ is covered with probability at least $1/4$.
Now, consider the instance $\ical'$ where the leftmost agent is associated with the interval $[0,\eps^2]$. Clearly, the optimal cost for this instance is $\eps^2$.
We claim that the mechanism should cover the interval $[0,\eps^2]$ with probability at least $1/4$. To see this, observe that if this was not the case, then the leftmost agent could declare the interval $[0,1]$ instead and decrease her cost.
Hence, given instance $\ical'$, the cost of $\calM$ is at least $\frac{1}{4}\eps + \frac{1}{4}\eps^2$ which implies that the approximation ratio of $\calM$ for $\ical'$ is at least $1/4\eps$. This concludes the proof. 
\end{proof}

\begin{theorem} \label{max:unknown-lengths}
For the max cost, when the lengths of the intervals are unknown, the approximation ratio of randomized truthful in expectation mechanism is $\Omega(1/\varepsilon)$, for any $\varepsilon \in (0,1)$.
\end{theorem}

\begin{proof}
We will show the bound for all mechanisms using the following two instances: 
Instance $I$ consists of two agents that are associated with the intervals $[0,1]$ and $[\varepsilon,1+\varepsilon]$. 
Instance $J$ is the same as $I$ with the difference that the first agent now has the interval $[0,\varepsilon^2]$.
We first present the argument for deterministic mechanisms, and then for randomized ones. 

\medskip
\noindent
{\bf Deterministic mechanisms.} Clearly, the optimal max cost in $I$ is $\varepsilon$, achieved by completely covering one of the agents. Let $C$ be the interval chosen by an arbitrary deterministic mechanism when given $I$ as input. The mechanism must cover at least a fraction $1/2$ of the intervals of both agents since otherwise the approximation ratio would be at least $1/(2\varepsilon)$. Since the mechanism covers at least $1/2$ of each agent and $|C|=1$, it must be the case that either $[0,\varepsilon] \subset C$ or $[1,1+\varepsilon] \subset C$. Without loss of generality, suppose that $[0,\varepsilon] \subset C$.

Now consider instance $J$. The mechanism must place the interval $C$ in $J$ such that $[0,\varepsilon^2] \subset C$; otherwise, the first agent would have incentive to misreport her interval as $[0,1]$, thus leading to instance $I$, in which the mechanisms places the interval so that the entire interval $[0,\varepsilon^2]$ of the agent is covered. Consequently, the max cost of the mechanism is at least $\varepsilon$, since it is not possible to cover more that $1-\varepsilon$ of the interval of the second agent. However, the optimal max cost in $J$ is $\varepsilon^2$ by covering entirely the second agent, thus leading to an approximation ratio of at least $\varepsilon/\varepsilon^2 = 1/\varepsilon$. 

\medskip
\noindent 
{\bf Randomized mechanisms.} 
Consider again instance $I$ with optimal max cost $\varepsilon$, and an arbitrary randomized mechanism. 
Suppose that with probability $p \geq 1/2$ mechanism covers at most a fraction $1/2$ of the intervals of both agents; then, the expected max cost of the mechanism would be at least $1/4$, leading to an approximation ratio of $1/(4\varepsilon)$. Hence, with probability at least $p\geq 1/2$, the mechanism covers at least a fraction $1/2$ of the intervals of both agents. When it does so, the mechanism covers either $[0,\varepsilon]$ or $[1,1+\varepsilon]$ (since it cannot cover both at the same time). Without loss of generality, suppose that the mechanism covers $[0,\varepsilon]$ with probability at least $p/2 \geq 1/4$.

Now, when given instance $J$ as input, the mechanism must cover the interval $[0,\varepsilon^2]$ with probability at least $1/4$; otherwise, the first agent would have incentive to misreport her interval as $[0,1]$, thus leading to instance $I$, in which the mechanism covers $[0,\varepsilon^2]$ with probability at least $1/4$ (since it covers $[0,\varepsilon]$ with probability at least $1/4$). Consequently, the expected max cost of the mechanism is at least $\varepsilon/4$, since it is not possible to cover more that $1-\varepsilon$ of the interval of the second agent when $[0,\varepsilon^2]$ is covered. However, the optimal max cost in $J$ is $\varepsilon^2$ by covering entirely the second agent, thus leading to an approximation ratio of at least $\varepsilon/(4\varepsilon^2) = 1/(4\varepsilon)$. 
\end{proof}

\subsection{Known but unequal interval lengths}
Another interesting variant that directly generalizes our main setting is that in which the interval lengths are known, but they are not necessarily equal. In the case of electricity supply for example, it is reasonable to assume that the government has good estimates of how much time each household requires to complete essential chores, based on verifiable information (e.g., the size of their property or the number of family members), not about their preferences on the different times of the day.

For the social cost, it is not hard to see that the vanilla median mechanism, and in fact any unweighted $k$-th ordered statistic, has an infinite approximation ratio for this case. However, we can show a linear approximation ratio by consider the {\sc Max-Length} mechanism, which places the interval at the starting position of the agent with the maximum-length interval among all agents. This mechanism is clearly truthful since the lengths are known. Without loss of generality, we will assume that the covering interval length is $1$.

\begin{theorem}
Let $\ell = \max_{i \in N} |I_i|$.
For the social cost, the approximation ratio of the {\sc Max-Length} mechanism is at most $n-1$ when $\ell \leq 1$, and at most $n$ when $\ell > 1$. 
\end{theorem}

\begin{proof}
Consider any arbitrary instance. 
Let $i$ be the agent with the max-length interval among all agents (i.e., $\ell = |I_i|$), and denote by $O$ the optimal interval. 
Without loss of generality, suppose that $O$ is at the right of $I_i$ (and thus also at the right of the covering interval $C$ chosen by the mechanism).   
Let $x \in [0,\min\{1,\ell\}]$ be the overlap between $I_i$ and $O$. 
For $j \in \{C,O\}$, let $L_j$ be the total interval length of agents at the left of $j$ that is not covered by $j$. Similarly, let $R_j$ be the total interval length of agents weakly to the right of $j$ that is not covered by $j$. 
Since $s_i \leq s_o$, we have that $L_C \leq L_O$. In case $\ell > 1$, $C$ does not cover a part $\ell_1$ of agent $i$, and thus $R_C = \max\{\ell-1,0\} + A+ R_O$, where $A$ is the total agent interval length that is covered by $O$ with the subinterval of length $\ell-x$ that does not overlap with $I_i$ from the right side. Since $i$ is the agent with the max interval length, we have $A \leq (n-1)(\ell-x)$. 
The approximation ratio is 
\begin{align*}
\frac{L_C+ R_C}{L_O+ \ell-x + R_O} 
    &\leq \frac{L_O+ \max\{\ell-1,0\} + A+ R_O}{L_O+ \ell-x + R_O} \\
    &\leq \frac{\max\{\ell-1,0\} + (n-1)(\ell-x)}{\ell-x} \\
    &\leq 
    \begin{cases}
    n-1, & \text{if } \ell \leq 1 \\
    n, & \text{if } \ell > 1 .
    \end{cases}
\end{align*}
where the second inequality follows by bounding $A$ and since $\frac{\alpha + \gamma}{\beta + \gamma} \leq \frac{\alpha}{\beta}$ for any $\alpha \geq \beta$ and $\gamma \geq 0$. In case $\ell > 1$, the last inequality follows since $\ell-1 \leq \ell-x$.  
\end{proof}

For the max cost, we show that {\sc Max-Length} achieves an approximation ratio of at most $2$. Due to Theorem~\ref{max:lower-deterministic-2}, this show that {\sc Max-Length} is essentially the best possible among all deterministic mechanisms when the lengths of the intervals are known (equal or unequal). 

\begin{theorem}
For the max cost, the approximation ratio of the {\sc Max-Length} mechanism is at most $2$. 
\end{theorem}

\begin{proof}
Consider any arbitrary instance. Let $i$ be the agent with the max-length interval among all agents. Without loss of generality, we can assume that this agent is the one with the leftmost starting point. Clearly, the approximation ratio is $1$ in case there is no overlap between the optimal covering interval $O$ and $I_i$ since then $\MC(I_i) \leq \max_{j \neq i} |I_j| \leq |I_i|$ and $\MC(O) = |I_i|$. 

Now, suppose that there is an intersection of length $x > 0$ between $I_i$ and $O$. Then, the optimal max cost is $\MC(O) = |I_i|-x$; indeed, the optimal max cost cannot be smaller than this, whereas if it was larger, then we could move the optimal covering interval towards the right to decrease it. This means that there is an interval of length at most $|I_i|-x$ after the end of $O$ that is not covered by $O$. Hence, the max cost of the mechanism is at most $2(|I_i|-x)$, that is, it is equal to the remaining length $|I_i|-x$ of $O$ that is not covered by the mechanism plus another $|I_i|-x$ that is after $O$. Consequently, the approximation ratio is at most $2$.
\end{proof}


\section{Other Open Problems and Directions}\label{sec:open}
We envision that the model we have introduced in this paper can serve as a basis for a plethora of further extensions motivated from real life scenarios. Having completely resolved the foundational version of the model, at least with respect to deterministic mechanisms, below we highlight what we consider to be some of the most prominent avenues for future work.

\paragraph{Multiple intervals.} A very meaningful extension is the one where each agent is associated with multiple intervals (say, $k_i$ intervals for agent $i$), and there are $k_c$ covering intervals to be placed (think of the the choice of several different open days at a university); those intervals could be of equal or unequal (known) length. A similar setting is one in which there is a covering budget (a total covering length $\ell_c$) which can be partitioned into intervals freely over the line. Similarly, the agents themselves could also have such interval budgets $\ell_i$; one could even impose some restrictions on the number of intervals that can be used by each agent, or by the covering budget. 

\paragraph{Different cost functions and objectives.} 
In this paper, we have considered perhaps the simplest and most intuitive cost function for the agents, namely the part of their intervals that is not covered by $C$. One could consider more complicated cost functions, e.g., functions where the cost is a convex or concave function of the proportion of the agents' interval(s) that are covered, or some most specific functions like a piecewise linear function (e.g., capturing cases where a certain amount of the interval \emph{has} to be covered for the agent to have any reduction in cost). In addition to different cost functions for the agents, one could also consider different cost functions for the aggregate objectives, such as the popular maximum cost objective, e.g., see \citep{PT09}. 

\paragraph{Utilities and social welfare.}
We could even consider (positive) utilities rather than (negative) costs. For example, in the simplest case of known and equal length intervals that we studied here, the utility of an agent would be the part of her interval that is covered, and the approximations would  be in terms of the {\em social welfare}, the total utility of the agents. It is not hard to observe that the social welfare and the social cost of a covering interval $C$ are related in this case; it holds that $\SW(C) = n - \SC(C)$. Given this, if we have a mechanism $\calM$ with a provable approximation ratio of $\rho$ in terms of the social cost, the approximation ratio of the same mechanism is at most $\frac{1}{\rho} + \frac{n(\rho-1)}{\rho \cdot \SW(\calM)}$ in terms of the social welfare. This directly gives us that the approximation ratio of the {\sc Median} mechanism is at most $n/2$, since $\rho = 2-2/n$ and $\SW(\calM) \geq 1$ (since at least one agent is completely covered). It is not hard to verify that the same arguments as in \Cref{thm:2-LB} can lead to an essentially matching lower bound for all deterministic truthful mechanisms, thus showing that the {\sc Median} mechanism is best possible even in terms of utilities and the social welfare. For more general settings, such relations between the social cost and social welfare might not exist however, and studying both of them is interesting. 

\paragraph{Obnoxious and hybrid models.}
There are also applications in which the agents might want to avoid any intersection with the covering interval. For example, when the interval corresponds to a public transportation line, the agents might want to not have any intersection with the interval since they have no interest in using the public transportation and want to avoid possible congestion or noise. On the other hand, some agents might want to have intersection with the interval in such a case as they want to use the public transportation, thus leading to interesting hybrid interval covering models. This sort of application draws parallels to similar models in the facility location literature \cite{chan2021mechanism}.

\paragraph{Higher-dimensional spaces.} One does not have to restrict attention only to intervals; a very similar setting can be defined in which each agent is associated with one or more geometric shapes on a higher dimensional space (e.g., the plane) and there is also one or more covering geometric shapes to be placed, aiming to minimize the cost of the agents as a function of the intersection with their shapes. For example, think that the covering comes from cellular antennas and every agent wants to minimize the area that they are not covered by the radius of the antenna.

\bibliographystyle{plainnat}
\bibliography{references}

\begin{thebibliography}{18}
\providecommand{\natexlab}[1]{#1}
\providecommand{\url}[1]{\texttt{#1}}
\expandafter\ifx\csname urlstyle\endcsname\relax
  \providecommand{\doi}[1]{doi: #1}\else
  \providecommand{\doi}{doi: \begingroup \urlstyle{rm}\Url}\fi

\bibitem[Abebe et~al.(2020)Abebe, Cole, Gkatzelis, and
  Hartline]{abebe2020truthful}
Rediet Abebe, Richard Cole, Vasilis Gkatzelis, and Jason~D Hartline.
\newblock A truthful cardinal mechanism for one-sided matching.
\newblock In \emph{Proceedings of the fourteenth annual ACM-SIAM symposium on
  discrete algorithms}, pages 2096--2113. SIAM, 2020.

\bibitem[Alon et~al.(2011)Alon, Fischer, Procaccia, and
  Tennenholtz]{alon2011sum}
Noga Alon, Felix Fischer, Ariel Procaccia, and Moshe Tennenholtz.
\newblock Sum of us: Strategyproof selection from the selectors.
\newblock In \emph{Proceedings of the 13th Conference on Theoretical Aspects of
  Rationality and Knowledge}, pages 101--110, 2011.

\bibitem[Bei et~al.(2022)Bei, Lu, and Suksompong]{bei2022cake}
Xiaohui Bei, Xinhang Lu, and Warut Suksompong.
\newblock Truthful cake sharing.
\newblock In \emph{Proceedings of the 36th {AAAI} Conference on Artificial
  Intelligence ({AAAI})}, pages 4809--4817, 2022.

\bibitem[Bjelde et~al.(2017)Bjelde, Fischer, and Klimm]{bjelde2017impartial}
Antje Bjelde, Felix Fischer, and Max Klimm.
\newblock Impartial selection and the power of up to two choices.
\newblock \emph{ACM Transactions on Economics and Computation (TEAC)},
  5\penalty0 (4):\penalty0 1--20, 2017.

\bibitem[Black(1948)]{black1948rationale}
Duncan Black.
\newblock On the rationale of group decision-making.
\newblock \emph{Journal of political economy}, 56\penalty0 (1):\penalty0
  23--34, 1948.

\bibitem[Chan et~al.(2021)Chan, Filos-Ratsikas, Li, Li, and
  Wang]{chan2021mechanism}
Hau Chan, Aris Filos-Ratsikas, Bo~Li, Minming Li, and Chenhao Wang.
\newblock Mechanism design for facility location problem: A survey.
\newblock In \emph{Proceedings of the 30th International Joint Conference on
  Artificial Intelligence (IJCAI)}, pages 1--17, 2021.

\bibitem[Dayama et~al.(2015)Dayama, Narayanaswamy, Garg, and
  Narahari]{dayama2015truthful}
Pankaj Dayama, Balakrishnan Narayanaswamy, Dinesh Garg, and Y~Narahari.
\newblock Truthful interval cover mechanisms for crowdsourcing applications.
\newblock In \emph{Proceedings of the 2015 International Conference on
  Autonomous Agents and Multiagent Systems}, pages 1091--1099, 2015.

\bibitem[Dummett and Farquharson(1961)]{dummett1961stability}
Michael Dummett and Robin Farquharson.
\newblock Stability in voting.
\newblock \emph{Econometrica: Journal of The Econometric Society}, pages
  33--43, 1961.

\bibitem[Filos-Ratsikas et~al.(2014)Filos-Ratsikas, Frederiksen, and
  Zhang]{filos2014social}
Aris Filos-Ratsikas, S{\o}ren Kristoffer~Stiil Frederiksen, and Jie Zhang.
\newblock Social welfare in one-sided matchings: Random priority and beyond.
\newblock In \emph{International Symposium on Algorithmic Game Theory}, pages
  1--12. Springer, 2014.

\bibitem[Fischer and Klimm(2014)]{fischer2014optimal}
Felix Fischer and Max Klimm.
\newblock Optimal impartial selection.
\newblock In \emph{Proceedings of the fifteenth ACM conference on Economics and
  computation}, pages 803--820, 2014.

\bibitem[Kleinberg and Tardos(2006)]{kleinberg2006algorithm}
Jon Kleinberg and Eva Tardos.
\newblock \emph{Algorithm design}.
\newblock Pearson Education India, 2006.

\bibitem[Kolen et~al.(2007)Kolen, Lenstra, Papadimitriou, and
  Spieksma]{kolen2007interval}
Antoon~WJ Kolen, Jan~Karel Lenstra, Christos~H Papadimitriou, and Frits~CR
  Spieksma.
\newblock Interval scheduling: A survey.
\newblock \emph{Naval Research Logistics (NRL)}, 54\penalty0 (5):\penalty0
  530--543, 2007.

\bibitem[Krysta et~al.(2014)Krysta, Manlove, Rastegari, and
  Zhang]{krysta2014size}
Piotr Krysta, David Manlove, Baharak Rastegari, and Jinshan Zhang.
\newblock Size versus truthfulness in the house allocation problem.
\newblock In \emph{Proceedings of the fifteenth ACM conference on Economics and
  computation}, pages 453--470, 2014.

\bibitem[Markakis et~al.(2022)Markakis, Papasotiropoulos, and
  Tsikiridis]{markakis2022improved}
Evangelos Markakis, Georgios Papasotiropoulos, and Artem Tsikiridis.
\newblock On improved interval cover mechanisms for crowdsourcing markets.
\newblock In \emph{International Symposium on Algorithmic Game Theory}, pages
  94--112. Springer, 2022.

\bibitem[Moulin(1980)]{moulin1980}
Herv{\'e} Moulin.
\newblock On strategy-proofness and single peakedness.
\newblock \emph{Public Choice}, 35\penalty0 (4):\penalty0 437--455, 1980.

\bibitem[Procaccia and Tennenholtz(2013)]{PT09}
Ariel~D. Procaccia and Moshe Tennenholtz.
\newblock Approximate mechanism design without money.
\newblock \emph{{ACM} Transactions on Economics and Computation}, 1\penalty0
  (4):\penalty0 18:1--18:26, 2013.

\bibitem[Roughgarden(2022)]{roughgarden2022algorithms}
Tim Roughgarden.
\newblock \emph{Algorithms illuminated}.
\newblock Soundlikeyourself publishing, 2022.

\bibitem[Xu et~al.(2020)Xu, Li, and Duan]{xu2020strategyproof}
Xinping Xu, Minming Li, and Lingjie Duan.
\newblock Strategyproof mechanisms for activity scheduling.
\newblock In \emph{19th International Conference on Autonomous Agents and
  Multi-Agent Systems, AAMAS 2020}, pages 1539--1547. Association for Computing
  Machinery, 2020.

\end{thebibliography}

\end{document}